\documentclass[preprint,12pt]{elsarticle}

\usepackage[utf8]{inputenc}
\usepackage[T1]{fontenc}
\usepackage{amsmath, amsfonts, amssymb}
\usepackage[amsmath,thmmarks]{ntheorem}

\usepackage[initials,nobysame,short-months]{amsrefs}

% ---------- Theorems ----------
\theorempreskipamount0.618ex
\theorempostskipamount0.618ex

\theoremstyle{plain}
\theoremheaderfont{\itshape\bfseries}
\theorembodyfont{\normalfont\itshape}
\theoremseparator{:}
\theoremsymbol{}
\newtheorem{theorem}{Theorem}
\newtheorem{proposition}[theorem]{Proposition}

\newtheorem{corollary}[theorem]{Corollary}

\newtheorem{definition}{Definition}
\newtheorem{axiom}{Axiom}

\theoremstyle{nonumberplain}

\theoremstyle{plain}
\theoremheaderfont{\itshape}
\theorembodyfont{\normalfont}
\theoremseparator{:}
\theoremsymbol{}
\newtheorem{remark}{Remark}

\theoremstyle{nonumberplain}
\theoremsymbol{\rule{1ex}{1ex}}
\newtheorem{proof}{Proof}

% ----- Definitionens -----
\def\D{ \mathbb{D} }							% unit disk
\def\dD{ \partial\mathbb{D} }			% unit circle
\def\DD{ \overline{\mathbb{D}} }	% unit disk´
% --- general Notations
\def\d{ \mathrm{d} }							% integral d
\def\I{ \mathrm{i} }							% imaginary unit
\def\E{ \mathrm{e} }							% e
							% T - transpose
\def\Energy{ \mathrm{E} }					% Energy
% --- Zahlenkrper ---
\def\NN{ \mathbb{N} }													% positive integers
													% integers
\def\RN{ \mathbb{R} }													% real numbers
\def\QN{ \mathbb{Q} }													% rational numbers
\def\CN{ \mathbb{C} }													% complex numbers
\def\RNc{ \Delta_{1} }												% computable real numbers
						% computable complex numbers
\def\Csemi{ \mathbf{C}_{1} }									% semi computable numbers
\def\Cweak{ \mathbf{C}_{2} }									% weakly computable numbers
% --- Function spaces ---
\def\B{ \mathcal{B} }													% Banach space
\def\Bc{ \mathcal{B}_{\mathrm{c}} }					  % B-computablle functions
\def\C{ \mathcal{C} }								  				% continuous functions
\def\Cc{ \mathcal{C}_{\mathrm{c}} }						% computable continuous functions
				% periodic continuous function
		% periodic continuous function
\def\H12{ H^{1/2} }														% Sobolev H^{1/2}
% --- Vectors ---
\def\bm{ \mathbf{m} }
% --- Sequences ---
											% bold sequence f
\def\bsx{ \boldsymbol x}											% bold sequence x
% --- Some Sets ---
\def\M{ \mathcal{M} }													% Set1 M1 und M2
											% Minimum
											% Maximum
											% Optimizer
				% global Optimizer

\def\Pd{ \Pi_{1} }														% Pi upper 
\def\Pu{ \Sigma_{1} }													% Pi lower
              					% Turing machine          

\newcommand{\Op}[1]{\mathrm{#1}}               % allgemeine Operatoren
\DeclareMathOperator*{\grad}{grad}			       % gradient
			 % argmin
% ------------------------

\sloppy
%%%%%%%%%%%%%%%%%%%%%%%%%%%%%%%%%%%%%%%%%%%%%%%%%%%%%%%%%%%%%%%%%%%%%%%%%%%%%%%%%%%%%%%%%%%%%%%%%%%%%%%%%%%%%%%%%%%%%%%%%%%%%%%%%%%%%%%%%%%%%%%%%%%
%%%%%%%%%%%%%%%%%%%%%%%%%%%%%%%%%%%%%%%%%%%%%%%%%%%%%%%%%%%%%%%%%%%%%%%%%%%%%%%%%%%%%%%%%%%%%%%%%%%%%%%%%%%%%%%%%%%%%%%%%%%%%%%%%%%%%%%%%%%%%%%%%%%
%%%%%%%%%%%%%%%%%%%%%%%%%%%%%%%%%%%%%%%%%%%%%%%%%%%%%%%%%%%%%%%%%%%%%%%%%%%%%%%%%%%%%%%%%%%%%%%%%%%%%%%%%%%%%%%%%%%%%%%%%%%%%%%%%%%%%%%%%%%%%%%%%%%
\begin{document}

\begin{frontmatter}

\title{Arithmetic Complexity of Solutions of the Dirichlet Problem}

\author[affilPohl,affilBoche]{Holger Boche}\ead{boche@tum.de}
\author[affilPohl]{Volker Pohl}\ead{volker.pohl@tum.de}
\author[affilPoor]{H. Vincent Poor}\ead{poor@princeton.edu}

%% Author affiliationBMBF Research Hub 6G-life, M\"unchen, Germany.
\affiliation[affilPohl]{organization={Technische Universit\"at M\"unchen},
            addressline={Arcisstr. 21}, 
            city={M\"unchen},
            postcode={80333}, 
            %state={},
            country={Germany}}
						
\affiliation[affilBoche]{organization={BMBF Research Hub 6G-life},
            %addressline={}, 
            city={M\"unchen},
            postcode={80333}, 
            %state={},
            country={Germany}}

\affiliation[affilPoor]{organization={Department of Electrical and Computer Engineering, Princeton University},%Department and Organization
            %addressline={}, 
            city={Princeton},
            postcode={08544}, 
            state={NJ},
            country={USA}}

%\cortext[cor1]{Corresponding author}
\tnotetext[label1]{This work was partly supported by the German Federal Ministry of Research, Technology and Space
(BMFTR) within the program "Souverän. Digital. Vernetzt” in the joint project 6G-life under Grant 16KISK002.
The work of H.~V.~Poor was partly supported by the U.S National Science Foundation under Grants CNS-2128448 and ECCS-2335876.}

% ========== Abstract ==========
\begin{abstract}
The classical Dirichlet problem on the unit disk can be solved by different numerical approaches. 
The two most common and popular approaches are the integration of the associated Poisson integral and, by applying Dirichlet's principle, solving a particular minimization problem.
For practical use, these procedures need to be implemented on concrete computing platforms.
This paper studies the realization of these procedures on Turing machines, the fundamental model for any digital computer.
We show that on this computing platform both approaches to solve Dirichlet's problem yield generally a solution that is not Turing computable, even if the boundary function is computable.
Then the paper provides a precise characterization of this non-computability in terms of the Zheng--Weihrauch hierarchy. 
For both approaches, we derive a lower and an upper bound on the degree of non-computability in the Zheng--Weihrauch hierarchy.
\end{abstract}

%% Keywords
\begin{keyword}
Harmonic Analysis \sep Computability \sep Boundary value problems \sep Dirichlet principle \sep Optimization \sep Zheng--Weihrauch hierarchy

%% PACS codes here, in the form: \PACS code \sep code

%% MSC codes here, in the form: \MSC code \sep code
%% or \MSC[2008] code \sep code (2000 is the default)
\end{keyword}

\end{frontmatter}

\newpage
% =====================================================================================================
% ========== INDRODUCTION =============================================================================
% =====================================================================================================
\section{Introduction}
\label{sec:intro}

Dirichlet problems play a central role in numerous areas of engineering and applied mathematics.
These problems search for functions that satisfy a particular partial differential equations in a certain region and that take prescribed values on the boundary of this region.
The most simple case is probably the Dirichlet problem that asks for solutions of Laplace's equation on the unit disk with prescribed values on the unit circle (cf. Section~\ref{sec:ProblemStatment} for a precise formulation).   
An important tool in solving this problem is \emph{Dirichlet's principle}.
It states that the solution of the Dirichlet problem can be found by variational methods \cite{Funk_Variationsrechnung_62} based on the minimization of the \emph{Dirichlet energy}.
Since the Dirichlet energy is equivalent to the norm in the Sobolev space $W^{1,2}_{0}$, this approach reduces to a convex optimization problem for which there exist several numerical approaches \cites{Bertsekas_ConvexOpt,Boyd_ConvexOpt}.
Moreover, Dirichlet's principle is one cornerstone in the development of modern computing methods based on calculus of variations \cites{GanderWanner_SIAM12,Repin_CMAM17} such as Ritz's method \cites{Ritz_1909,Leipholz_87}, the Galerkin method \cites{Galerkin_1915,Grote_GalerkinFEM_06}, the finite element method \cites{Courant_BullAMS43,BrennerScott_FEM2008,Acosta_FEM_SIAM17}, or the boundary element method \cite{CHENG_BME2005}, to mention only a few.
Apart from optimization techniques, the solution of Dirichlet's problem may also be computed based on the Poisson integral.

In principle there exist many different approaches to solve the Dirichlet problem, either using optimization technique based on the Dirichlet principle or by numerically integration based on the Poisson integral.
Both approaches are active research field in harmonic and Fourier analysis and it is an important practical question, whether these theoretical approaches can be implemented as concrete algorithms on a certain computing platform.
Thereby, the notion of an "algorithm" strongly depends on the underlying computing platform.
Thus, it might be possible to implement a certain mathematical approach on one computing platform, whereas it might be impossible to implement this approach on another computing platform.
Nowadays there exist, in principle, several different computing platforms, e.g. digital, quantum, or neuromorphic computing.
Among these different platforms, digital computing is by far the most widely used platform and the corresponding theory of algorithms for this platform (i.e. Turing's computing theory) is the most advanced computing theory do date.

In this paper, we  investigate whether it is possible to implement the two mentioned approaches for solving the Dirichlet problem on the unit disk (namely the approach using Poisson integral and the variational approach based on Dirichlet's principle) on a Turing machine. Thus, we ask whether there exist algorithms on Turing machines for these two approaches that are able to compute the corresponding solution.
Since Turing machines can only compute with rational numbers, any algorithm on a Turing machine can generally only compute a (rational) approximation of the true solution.
It is also clear that any Turing machine has to stop after finitely many steps.
Consequently, and according to Turing's definition of computability, any algorithm on a Turing machine has to stop the approximation process of the true solution after finitely many operations if a required approximation error is achieved.
Only if such a controlled termination, depended on the approximation error, is possible, one says that the algorithm \emph{effectively converges} to the solution and that the solution is \emph{Turing computable}.   
Such an effective error control is particularly important for applications with stringent requirements on security and for autonomous systems that need to guarantee error bounds autonomously without any human support.

This paper investigates whether it is possible to find effective algorithms to solve the Dirichlet problem either based on minimization techniques or based on the computation of the Poisson integral on Turing machines.
This will be done for two natural sets of boundary functions:
1) The set of computable continuous functions of finite Dirichlet energy.
2) The set of continuous functions that can effectively approximated by a harmonic series in the (Dirichlet) energy norm.
It is shown that there exists no algorithm on a Turing machine that is able to effectively solve the minimization problem associated with the Dirichlet problem for all boundary functions in this first set and it is shown that it is not possible to effectively compute the Poisson integral (in order to solve the Dirichlet problem) for all boundary functions in the second set. So in both cases, the solution of the Dirichlet problem is generally not Turing computable.
Similarly as the Turing degrees characterize the undecidability of certain problems, the Zheng--Weihrauch hierarchy provides a possibility to characterize the degree of Turing non-computability.
Based on the Zheng--Weihrauch hierarchy we will derive precise lower and upper bounds on the degree of non-computability of the solutions of the Dirichlet problem in the two cases.

The problems investigate in this paper also have a quite general theoretical aspect.
In science, one usually has general physical theories, e.g. Maxwell's equations that describe the phenomenons of classical electrodynamics and electrostatics.
These theories are usually based on some general physical principles with a corresponding precise mathematical formulation.
For example, the \emph{principle of minimum energy}, i.e. Dirichlet's principle \cites{Ritz_1909, Mikhlin_VariationalMethods}, leads to variational calculus.
Similarly, the \emph{maximum principle} leads to the solution of the Dirichlet problem based on the Poisson integral.
Then it is an interesting question whether these physical theories are compatible with a certain computing model in the following sense:
Assume that a problem in such a physical theory has a natural defined computable input (with respect to the computing model). 
Is it then true that also the physical performance indicator of this problem (e.g. the minimal energy or the maximum value) is computable in this computing model?
The results of this paper will show that for digital computing, i.e. for Turing machines, the answer is generally negative.

The remainder of this paper is organized in the following way:
Section~\ref{sec:Notation} clarifies general notation, Section~\ref{sec:Computability} recalls basic concepts from computability analysis, and
Section~\ref{sec:ZWHierarchy} gives a short introduction to the Zheng--Weihrauch Hierarchy of non-computability. 
After these preparations, Section~\ref{sec:ProblemStatment} gives a precise formulation of the Dirichlet problem on $\D$ and provides a detailed problem statement.
Then Section~\ref{sec:MainMinimizationProblem} investigates the computability of solutions of the Dirichlet problem bases on minimizing Dirichlet's energy, and Section~\ref{sec:MainDirichletIntegral} investigates the computability of solutions obtain by integrating Poisson integral.
The paper closes with a short summary in Section~\ref{sec:Conclusion}.

% ===============================================
% ========== Dirichlet Principle ================
% ===============================================
\section{General Notation and Function Spaces}
\label{sec:Notation}

We use standard notations from harmonic analysis and functional analysis (see, e.g. \cites{Liflyand_HarmonicAnalysis,Rudin,Zygmund}), in particular stands $\D = \left\{ z\in\CN : \left|z\right| < 1\right\}$ and $\dD = \left\{ z\in\CN : \left|z\right| = 1\right\}$ for the \emph{unit disk} and the \emph{unit circle}, respectively, in the complex plane $\CN$.
Then we write $\DD = \D \cup \dD$ for the \emph{closed unit circle} in $\CN$.
For any $1\leq p \leq \infty$, $L^{p}(\dD)$ denotes the usual space of $p$-integrable functions on $\dD$.
The Banach space of continuous functions on $\dD$ equipped with the uniform norm $\left\|f\right\|_{\infty} = \max_{\zeta\in\dD} f(\zeta)$ is denoted by $\C(\dD)$.

For any $f\in L^{1}(\dD)$, its \emph{Fourier coefficients} are given by
\begin{equation}
\label{equ:FourierCoef}
\begin{aligned}
	a_{n}(f) &= \frac{1}{\pi}\int^{\pi}_{-\pi} f(\E^{\I\theta})\, \cos(n \theta)\, \d \theta\,,\quad n=0,1,2,\dots\\
	b_{n}(f) &= \frac{1}{\pi}\int^{\pi}_{-\pi} f(\E^{\I\theta})\, \sin(n \theta)\, \d \theta\,,\quad n=1,2,3,\dots\,,
\end{aligned}
\end{equation}
and any $f\in L^{2}(\dD) \subset L^{1}(\dD)$ can be recovered from its Fourier coefficients by the harmonic series
\begin{equation}
\label{equ:FourierSeries}
	f(\E^{\I\theta})
	= \frac{a_{0}(f)}{2} + \sum^{\infty}_{n=1} \left[ a_{n}(f)\cos(n\theta) + b_{n}(f)\sin(n\theta) \right]\,,
\end{equation}
where the sum converges in the norm of $L^{2}(\dD)$ and equality in \eqref{equ:FourierSeries} means equality in $L^{2}(\dD)$.
According to Parseval's identity, one has
\begin{equation*}
	\left\|f\right\|^{2}_{2}
	= \frac{\left|a_{0}(f)\right|^{2}}{4} + \frac{1}{2}\sum^{\infty}_{n=0} \left( \left|a_{n}(f)\right|^{2} + \left|b_{n}(f)\right|^{2} \right)\,.
\end{equation*}
Often, especially in signal processing, this value is called the energy of $f$ and so $L^{2}(\dD)$ is said to be the space of functions on $\dD$ with finite energy.
Nevertheless, from a physical point of view the so-called \emph{Dirichlet energy} \cites{Ross_DirichletSpace,Arcozzi_DirichletSpace11} is often of much more physical importance.
For $f \in L^{2}(\dD)$, it is given by
\begin{equation*}
	\Energy(f) 
	= \frac{1}{2}\sum^{\infty}_{n=1} n \left( \left|a_{n}(f)\right|^{2} + \left|b_{n}(f)\right|^{2} \right)\;,
\end{equation*}
and we write
\begin{equation*}
	\H12(\dD) = \left\{ f \in L^{2}(\dD)\ :\ \Energy(f) < \infty \right\}
\end{equation*}	
for the set of all $L^{2}(\dD)$-functions of finite Dirichlet energy. 
Equipped with the norm
\begin{equation}
\label{equ:H12_norm}
	\left\|f\right\|_{H^{1/2}}
	= \left( \tfrac{\left|a_{0}(f)\right|^{2}}{4} + \Energy(f) \right)^{1/2}\,,
\end{equation}
$\H12(\dD)$ becomes a complete Banach space.
In fact $\H12(\dD) = W^{\frac{1}{2},2}(\dD)$ is the usual Sobolev-space with index $1/2$ \cites{Adams_SobolevSpaces,Evans_PDEs}.

A real valued function $u : \D \to\RN$ is said to be \emph{harmonic} in $\D$ 
if, at every point $z = x+\I y \in\D$, it is twice partially differentiable with respect to $x$ and $y$
and satisfies 
\begin{equation*}
	\left(\Delta u\right)(z) = \frac{\partial^{2} u}{\partial x^{2}}(z) + \frac{\partial^{2} u}{\partial y^{2}}(z) = 0\,,
	\quad\text{for all}\ z\in\D\,.
\end{equation*}
With every function $u$ that is harmonic in $\D$, we associate its so-called \emph{Dirichlet integral}
\begin{equation}
\label{equ:DirichletEnergy}
	\Energy(u) = \frac{1}{2\pi}\iint_{\D} \left\|(\grad u)(z)\right\|^{2}_{\RN^{2}}\, \d x \d y\;,
\end{equation}
and the value $\Energy(u)$ is said to be the \emph{Dirichlet energy} of $u$ \cite{Ross_DirichletSpace}.

% =========================================
% ========== Computability ================
% =========================================
\section{Computability Analysis}
\label{sec:Computability}

The theoretical foundations for analyzing digital computation was laid by \emph{Alain Turing} \cites{Turing_1937,Turing_1938}.
Its theory is based on the concept of a \emph{Turing machine}, i.e. an abstract machine that is able to manipulate symbols on a tape according to some simple rules, and any algorithm for a digital computer can be translated into a corresponding program on a Turning machine.
However, in contrast to real-world computers, a Turing machine has no limitations on storage, energy consumption, or computation time. Moreover, all computations are error free.
This way, the model of a Turing machine provides a fundamental upper bound on the capability of any digital computers.

The most fundamental limitation of a Turing machine (i.e. of any digital computer) is the property that it can only compute with rational numbers.
This entails that the inputs and outputs of a Turing machine can only be rational numbers.
So the question arises how is it possible to compute non-rational quantities and how to have non-rational quantities as inputs?
Moreover, how it is possible to have more complicated objects like sequences or functions as in- or outputs of a Turing machine?
In other words, what kind of objects can actually be handled by a Turing machine?
The answer is simple: Turing machines can only handle algorithms for Turing machines.
Therefore all mathematical objects needed to be described by an algorithm on a Turing machine. Then if we want to give this object to a Turing machine, we simply hand over the corresponding algorithm. 
It turns out, that not all mathematical object can be described by an algorithm on a Turing machine. 
Those, for which such a program exist, are said to be \emph{(Turing-) computable}.
These ideas are made precise by \emph{computability analysis} \cites{PourEl_Computability,Weihrauch_ComputableAnalysis}. It studies those parts of analysis and functional analysis that can be carried out on Turing machines.
In the first subsection, we briefly recall some of its notions and concepts. Thereby, we only assume that the reader has some basic knowledge about \emph{recursive functions}. 
These are functions $\NN^{N} \to \NN$, for some $N\in\NN$, that are build from elementary computable functions and recursions.
For this paper, it is only important that these are functions with integer inputs and outputs that can be implemented on a Turing machine.

Every real number $x\in\RN$ is the limit of a sequence $\left\{ r_{n} \right\}_{n\in\NN} \subset\QN$ of rational numbers .
For $x$ to be computable, two additional conditions need to be satisfied: 
\begin{enumerate}
\item the \emph{sequence} $\left\{ r_{n} \right\}_{n\in\NN}$ needs to be \emph{computable}
\item the \emph{convergence} $r_{n} \to x$ needs to be \emph{effective}
\end{enumerate}
We first give a formal definition of both properties:

\begin{definition}[Computable rational sequence]
A sequence $\left\{ r_{n} \right\}_{n\in\NN} \subset\QN$ of rational numbers is said to be computable if there exist recursive functions $a,b,s : \NN\to\NN$ with $b(n) \neq 0$ for all $n\in\NN$ such that
\begin{equation*}
	r_{n} = (-1)^{s(n)} \frac{a(n)}{b(n)}\,,
	\qquad\text{for all}\ n\in\NN\;.
\end{equation*}
\end{definition}

\begin{definition}[Effective convergence]
A sequence $\left\{ x_{n} \right\}_{n\in\NN} \subset \RN$ is said to \emph{effectively converge} to $x \in\RN$ if there there exists a recursive function $e : \NN\to\NN$ such that
\begin{equation}
\label{equ:EffctConv}	
	n \geq e(M)
	\qquad\text{implies}\quad
	\left| x - x_{n} \right|	< 2^{-M}\,.
\end{equation}
\end{definition}
So effective convergence not only means that $x_{n}$ converges to $x$ but also that one is able to control the approximation error in the sense that to every $M\in\NN$ one can determine an index $N_{0} = e(M)$ such that the approximation error $\left| x - x_{n} \right|$ is guaranteed to be less than $2^{-M}$ provided $n\geq N_{0}$.

\begin{definition}[Computable number]
\label{def:CompNumber}
An $x\in\RN$ is said to be \emph{computable} if there exists a computable sequence of rational numbers $\left\{ r_{n} \right\}_{n\in\NN} \subset\QN$ 
that effectively converges to $x$.
\end{definition}

\begin{remark}
For reasons that become will clear in the next section, we denote the set of all computable real numbers by $\RNc$.
\end{remark}

The idea of effective convergence can be extended to other objects like sequences, functions, etc.:

\begin{definition}[Computable sequence]
A sequence $\left\{ x_{n} \right\}_{n\in\NN}\subset\RN$ of real numbers is said to be \emph{computable} if there exists a double sequence $\left\{ r_{n,m} \right\}_{n,m\in\NN} \subset \QN$ of rational numbers that converges uniformly and effectively to $\left\{ x_{n} \right\}_{n\in\NN}$, i.e.
there exists a recursive function $e: \NN\times \NN \to \NN$ such that for all $n,M \in \NN$ 
\begin{equation*}
	m\geq e(n,M)
	\quad\text{implies}\quad
	\left|x_{n} - r_{n,m} \right| < 2^{-M}\;.
\end{equation*}
\end{definition}

There are many different notions that describe the computability of functions.
The weakest and most natural of these notions is the so called Banach--Mazur computability.

\begin{definition}[Banach--Mazur computable function]
\label{def:BMCompFunc}
A real function $f$ on $\dD$ is said to be \emph{Banach--Mazur computable} (or \emph{sequentially computable}) if $f$ maps every computable sequences of computable numbers $\left\{ \theta_{n} \right\}_{n\in\NN}$ into a computable sequence $\left\{ f(\E^{\I\theta_{n}}) \right\}_{n\in\NN}$ of computable numbers.
\end{definition}

This paper mainly considers Banach spaces $\B$ of functions on the unit circle.
Apart from the pointwise computability of such functions, as characterized by the Banach--Mazur computability, one also needs a "computability structure" that characterizes the effective convergence in the corresponding norm.
Such a computability structure can be defined in an axiomatic way by characterizing the set $\mathcal{S}$ of all "computable sequences" in $\B$ \cite{PourEl_Computability}:

\begin{axiom}
Let $\left\{ f_{n}\right\}_{n\in\NN}$ and $\left\{ g_{n}\right\}_{n\in\NN}$ be computable sequences in $\B$, let $\left\{\alpha_{n,k}\right\}_{n,k\in\NN}$ $\left\{\beta_{n,k}\right\}_{n,k\in\NN}$ be computable double sequence of real numbers, and let $d : \NN \to\NN$ be a recursive function, then the sequence $\left\{ h_{n} \right\}_{n\in\NN}$ defined by
\begin{equation*}
	h_{n} = \sum^{d(n)}_{k=0} \left( \alpha_{n,k}\, f_{k} + \beta_{n,k}\, g_{k} \right)\,,\qquad n\in\NN
\end{equation*}
is a computable sequence in $\B$.
\end{axiom}

\begin{axiom}
Let $\left\{ f_{n,k} \right\}_{n,k\in\NN}$ be a computable double sequence in $\B$ and let $\left\{ f_{n} \right\}_{n\in\NN}$ be a sequence in $\B$ such that there exists a recursive function $e : \NN\to\NN$ such that
\begin{equation}
\label{equ:EffConvInKN}
	k \geq e(n,N)
	\quad\text{implies}\quad
	\left\|f_{n,k} - f_{n} \right\|_{\B} \leq 2^{-N}\,.
\end{equation}
Then $\left\{ f_{n} \right\}_{n\in\NN}$ is a computable sequence in $\B$.
\end{axiom}

\begin{axiom}
If  $\left\{ f_{n} \right\}_{n\in\NN}$ is a computable sequence in $\B$, then $\left\{ \left\|f_{n}\right\|_{\B} \right\}_{n\in\NN}$ is a computable sequence of real numbers.
\end{axiom}

\begin{remark}
In these axioms we used that a double sequence $\left\{ \alpha_{n,k} \right\}_{n,k\in\NN}$ is said to be \emph{computable} if it is mapped on a computable sequence by one of the standard recursive pairing functions $\NN\times\NN \to \NN$ \cite{PourEl_Computability}.
\end{remark}

\begin{remark}
If \eqref{equ:EffConvInKN} is satisfied then one says that $\left\{ f_{n,k} \right\}_{n,k\in\NN}$ converges to $\left\{ f_{n} \right\}_{n\in\NN}$ as $k\to\infty$, effectively in $k$ and $n$.
\end{remark}
Having defined a computable structure on a Banach space $\B$ (i.e. the set $\mathcal{S}$ of all sequence that satisfy the previous axioms), an element $f \in \B$ is said to be computable, if the sequence $\left\{ f,f,f, \dots \right\}$ is a computable sequence in $\B$.

In this paper, we only consider Banach spaces $\B$ with a computability structures $\mathcal{S}$ that is \emph{effectively separable} in the following sense.

\begin{definition}[Effectively separable]
A Banach space $\B$ with a computability structure $\mathcal{S}$, denoted by $(\B,\mathcal{S})$, is called \emph{effectively separable} if there exists a computable sequence $\left\{ e_{n} \right\}_{n\in\NN} \subset \B$ whose linear span is dense in $\B$.
Such a sequence $\left\{ e_{n} \right\}_{n\in\NN}$ is called an \emph{effective generating set} of $(\B,\mathcal{S})$.
\end{definition}

Since we consider functions on $\dD$ (i.e. $2\pi$-periodic functions) the most natural effective generating set for these spaces is the sequence of trigonometric polynomials, i.e. the set
\begin{equation*}
	\left\{ 1 \right\} \cup \left\{ \cos(n\theta) \right\}_{n\in\NN} \cup \left\{ \sin(n\theta) \right\}_{n\in\NN}\,,
	\qquad \theta\in [-\pi,\pi)\;.
\end{equation*}
Then a sequence $\left\{ \varphi_{n} \right\}_{n\in\NN}$ is said to be a \emph{computable sequence of trigonometric polynomials} if the functions $\varphi_{n}$ have the form
\begin{equation}
\label{equ:TrigPoly}
	\varphi_{n}(\E^{\I\theta})
	= \frac{a_{n}(0)}{2} + \sum^{d(n)}_{k=1} \left[ a_{n}(k)\, \cos(k \theta) + b_{n}(k)\, \sin(k \theta) \right]\,,
	\quad \theta\in [-\pi,\pi)\,,
\end{equation}
wherein $\left\{ a_{n}(k) \right\}_{n,k\in\NN}, \left\{ b_{n}(k) \right\}_{n,k\in\NN}$ are computable (double) sequences of computable numbers and $d : \NN\to \NN$ is a recursive function that computes for every $n\in\NN$ the degree $d(n)$ of the trigonometric polynomial $\varphi_{n}$.

For Banach spaces $\B$ of functions on the $\dD$, there exists a generic definition of computability that satisfies the previous axioms of a "computability structure" \cite{PourEl_Computability}.
This definition defines the computable objects in $\B$ as those functions that can effectively be approximated by trigonometric polynomials:

\begin{definition}[Computable functions in Banach spaces]
\label{def:ComputableBanachSpace}
1) A function $f\in\B$ in a Banach space $\B$ of functions on $\dD$ is said to be \emph{$\B$-computable} if there exists a computable sequence $\left\{ \varphi_{n} \right\}_{n\in\NN}$ of trigonometric polynomials that effectively converges to $f$, i.e. there exists a recursive function $e : \NN\to\NN$ such that
\begin{equation*}
	n\geq e(M)
	\quad\text{implies}\quad
	\left\|f - \varphi_{n} \right\|_{\B} < 2^{-M}\,.
\end{equation*}
2) A sequence $\left\{ f_{n} \right\}_{n\in\NN} \subset \B$ is said to be $\B$-computable if there exists a computable double sequence $\left\{ \varphi_{n,k} \right\}_{n,k\in\NN}$ of trigonometric polynomials and a recursive function $e : \NN\to\NN$ such that
\begin{equation*}
	k \geq e(n,N)
	\quad\text{implies}\quad
	\left\|\varphi_{n,k} - f_{n} \right\|_{\B} \leq 2^{-N}\,.
\end{equation*}
The set of all $\B$-computable function is denoted by $\Bc$.
\end{definition}

\begin{remark}
We mainly need the two Banach space $\Cc(\dD)$ and $H^{1/2}_{\mathrm{c}}(\dD)$.
A function $f \in \Cc(\dD)$ is said to be a \emph{computable continuous function}.
\end{remark}

\begin{remark}
It is not hard to verify \cite{PourEl_Computability} that the $\B$-computable sequences satisfy the three axiom of a computable structure defined above and that the set of trigonometric polynomials is an effective generating set for these Banach spaces $\B$ together with the computability structure defined by Definition~\ref{def:ComputableBanachSpace}.
\end{remark}

\begin{remark}
Note that Axiom~$3$ of the computability structure implies that if $\big\{ f_{n} \big\}_{n\in\NN} \subset\B$ is a computable sequence that effectively converges to $f$ then $\big\{ \left\|f_{n}\right\|_{\B} \big\}_{n\in\NN}$ is a computable sequence of computable numbers that effectively converges to $\left\|f\right\|_{\B}$. So in particular, $\left\|f\right\|_{\B}$ is a computable number for every computable $f \in\B$.
\end{remark}

\begin{remark}
We also note that every computable continuous function is also Banach--Mazur computable \cite{PourEl_Computability}. 
Conversely there are Banach--Mazur computable functions that are not computable continuous functions.
\end{remark}

% ======================================================
% ========== Zheng--Weihrauch Hierarchy ================
% ======================================================
\section{The Zheng--Weihrauch Hierarchy of Non-Computability}
\label{sec:ZWHierarchy}

In computer science and mathematical logic, there exist important decision problems that are not algorithmic solvable.
Probably the best known examples is the \emph{halting problem} which is known to be \emph{undecidable} \cites{Church_AJM1936,Turing_1937}, i.e.
there exists no general algorithm that solves the halting problem for all possible inputs. 
Closely related is G\"odels incompleteness theorem \cite{Goedel_1931} that shows that in any consistent system of axioms there always exist true statements that cannot be proven within the system. 
This existent of undecidable problems, inspired the introduction of \emph{Turing degrees}, due to Post and Kleene	\cites{Post_AMS44,KleenePost_54}, that characterize in some sense the degree of unsolvability.

This paper considers functions that are obtained as solution of a certain mathematical operation, namely solutions of the Dirichlet problem. 
The input data of the mathematical operation (i.e. the data of the Dirichlet problem) are assumed to be objects that can be described in an algorithmic way (cf. Section~\ref{sec:Computability}). 
However, as we will see, the functions, i.e. the solutions of the mathematical operation, are generally not algorithmically computable.  
Therefore, we ask for the \emph{degree of non-computability} of these solutions measured in an appropriate hierarchy of algorithmically non-computability.
Since the values of the solution functions are real numbers, we apply the \emph{Zheng--Weihrauch hierachy} \cite{ZhengWeihrauch_MathLog01} which is a natural extension of the theory of Turing degrees for real numbers.
This hierarchy characterizes the non-computability of real numbers.
It consists of sets of real numbers. Numbers in the same set (i.e. in the same layer of the hierarchy) have the same degree of non-computability. 
The first layer of this hierarchy contains the computable numbers. Higher layers contain increasingly less computable numbers.
These layers thus describe the arithmetically complexity of the numbers in this layer. Therefore, we sometimes call these layers \emph{complexity classes}.
In this paper, we only need the first two layers of the Zheng--Weihrauch hierarchy. These two layers are briefly described and motivated in the following two subsections.
For completeness and to place our results in larger context, the definition of the whole hierarchy is given in \ref{sec:AppZW}.

% ===== The first Layer of Zheng-Weihrauch =====
\subsection{The first layer in the Zheng--Weihrauch hierarchy}

Following \cite{ZhengWeihrauch_MathLog01}, we start by defining two sets of \emph{semi-computable} numbers:

\begin{definition}[The sets $\Sigma_{1}$, $\Pi_{1}$ and $\Csemi$]
An $x \in\RN$ belongs to $\Pu$ if and only if there exists a computable sequence $\left\{ \zeta_{n} \right\}_{n\in\NN} \subset\RNc$ such that
\begin{equation*}
	\zeta_{n+1} \geq \zeta_{n}\,,\ \text{for all}\ n\in\NN\,,
	\quad\text{and}\quad
	\lim_{n\to\infty} \zeta_{n} = x\;.
\end{equation*}
An $x \in\RN$ belongs to $\Pd$ if and only if there exists a computable sequence $\left\{ \xi_{n} \right\}_{n\in\NN}\subset\RNc$ such that
\begin{equation}
\label{equ:propXi}
	\xi_{n+1} \leq \xi_{n}\,,\ \text{for all}\ n\in\NN\,,
	\quad\text{and}\quad
	\lim_{n\to\infty} \xi_{n} = x\;.
\end{equation}
We write $\Csemi = \Sigma_{1} \cup \Pi_{1}$ for the set of all semi-computable numbers.
\end{definition}

\begin{remark}
For obvious reasons, $x\in\Sigma_{1}$ is usually called \emph{left-computable}, whereas $x\in\Pi_{1}$ is said to be \emph{right-computable}.
\end{remark}
A semi-computable number is generally not computable. However, it is not hard to see \cite{PourEl_Computability} that if $x\in\RN$ is left- \emph{and} right-computable, then it is also computable in the sense of Def.~\ref{def:CompNumber}.

\begin{proposition}
\label{prop:Delta1isUnion}
A number $x \in\RN$ belongs to $\Delta_{1}$ if and only if $x \in \Sigma_{1} \cap \Pi_{1}$, that is $\Delta_{1} = \Sigma_{1} \cap \Pi_{1}$.
\end{proposition}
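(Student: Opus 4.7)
The plan is to show both inclusions $\Delta_{1} \subseteq \Sigma_{1} \cap \Pi_{1}$ and $\Sigma_{1} \cap \Pi_{1} \subseteq \Delta_{1}$ in turn, using the effective modulus of convergence of Definition~\ref{def:CompNumber} and the axioms of a computable structure from Section~\ref{sec:Computability}.

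For the easy inclusion $\Delta_{1} \subseteq \Sigma_{1} \cap \Pi_{1}$, I would start with a computable rational sequence $\{r_{n}\}_{n\in\NN}$ that effectively converges to $x$ with recursive modulus $e : \NN \to \NN$, so that $|x - r_{e(k)}| < 2^{-k-1}$ for every $k\in\NN$. Setting $u_{k} = r_{e(k)} - 2^{-k-1}$ and $v_{k} = r_{e(k)} + 2^{-k-1}$ yields computable rational sequences with $u_{k} < x < v_{k}$ and $u_{k}, v_{k} \to x$. To enforce monotonicity I would pass to the running extrema $\zeta_{n} = \max_{k \leq n} u_{k}$ and $\xi_{n} = \min_{k \leq n} v_{k}$, which are computable rational sequences (maxima and minima over finite, recursively indexed sets of rationals are computable), with $\zeta_{n}$ non-decreasing, $\xi_{n}$ non-increasing, both still converging to $x$ because they are sandwiched between $u_{n}, v_{n}$ and $x$. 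This places $x$ in $\Sigma_{1}$ and in $\Pi_{1}$ simultaneously.

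For the non-trivial inclusion $\Sigma_{1} \cap \Pi_{1} \subseteq \Delta_{1}$, suppose $x \in \Sigma_{1} \cap \Pi_{1}$ is witnessed by a non-decreasing computable sequence $\{\zeta_{n}\} \subset \RNc$ and a non-increasing computable sequence $\{\xi_{n}\} \subset \RNc$, both with limit $x$. Monotonicity gives $\zeta_{n} \leq x \leq \xi_{n}$ and hence $|x - \zeta_{n}| \leq \xi_{n} - \zeta_{n}$, and the differences $d_{n} = \xi_{n} - \zeta_{n}$ form a non-negative, non-increasing, computable sequence of computable reals that converges to zero. The delicate point is that the convergence $d_{n} \to 0$ is not assumed effective, so a direct modulus is not handed to us; this is the main obstacle. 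I would overcome it with an unbounded search: since $\{\zeta_{n}\}$ and $\{\xi_{n}\}$ are computable sequences of computable numbers, one can effectively produce rationals $r_{n}, s_{n}$ with $|r_{n} - \zeta_{n}| < 2^{-n}$ and $|s_{n} - d_{n}| < 2^{-n}$, and then define
\begin{equation*}
e(M) \;=\; \mu n \geq M + 3 \colon\; s_{n} < 2^{-M-2}\,.
\end{equation*}
This search terminates: because $d_{n} \to 0$ there exists some $n_{0}$ with $d_{n_{0}} < 2^{-M-3}$, and from $n = \max\{n_{0}, M+3\}$ onward one has $s_{n} \leq d_{n} + 2^{-n} < 2^{-M-2}$. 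Hence $e$ is a recursive function.

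Finally I would verify that the computable rational sequence $\{r_{e(M)}\}_{M\in\NN}$ converges effectively to $x$ with modulus $e$. Indeed, for any $n \geq e(M)$ one has $d_{n} \leq d_{e(M)} \leq s_{e(M)} + 2^{-e(M)} < 2^{-M-2} + 2^{-M-3}$ by monotonicity of $d_{n}$, and therefore
\begin{equation*}
|x - r_{e(M)}|
\;\leq\; |x - \zeta_{e(M)}| + |\zeta_{e(M)} - r_{e(M)}|
\;\leq\; d_{e(M)} + 2^{-e(M)}
\;<\; 2^{-M-2} + 2 \cdot 2^{-M-3}
\;=\; 2^{-M-1}\,,
\end{equation*}
which is in particular less than $2^{-M}$. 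By Definition~\ref{def:CompNumber}, this shows $x \in \Delta_{1}$, completing the argument.
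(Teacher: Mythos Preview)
Your argument is correct. The paper does not actually give a proof of this proposition: it is stated without proof, preceded only by the remark that ``it is not hard to see'' and a citation to Pour-El and Richards. So there is nothing in the paper to compare against, and your write-up supplies a complete, self-contained justification where the paper defers to the literature.

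Two cosmetic remarks. First, in your final paragraph the phrase ``with modulus $e$'' is slightly off: $e$ indexes into the original sequence $\{r_{n}\}$, whereas for the reindexed sequence $\{r_{e(M)}\}_{M}$ the effective modulus is simply the identity, since you have shown $|x - r_{e(M)}| < 2^{-M-1}$ outright. Second, the clause ``for any $n \geq e(M)$'' in the verification is unused --- only the value $n = e(M)$ matters for the estimate you actually carry out. Neither point affects the validity of the proof.
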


\begin{remark}
Note that $\Delta_{1}$ is a proper subset of $\Sigma_{1}$ and $\Pi_{1}$, i.e. there exist numbers in $\Sigma_{1}$ and $\Pi_{1}$ that are not computable.
\end{remark}

It can be shown \cite{ZhengWeihrauch_MathLog01} that $\Delta_{1}$ is closed under the usual arithmetical operations addition, subtraction, multiplication and division.
This is not true for the sets $\Sigma_{1}, \Pi_{1}, \Csemi$, of semi-computable numbers.
In fact, there exist left computable numbers $a,b\in\Sigma_{1}$ such that $x = a-b \notin \Csemi$.
This motivated \cite{WeihrauchZhengHierarchy_98} to define the following set of weakly computable numbers:

\begin{definition}[The set $\Cweak$ of weakly computable numbers]
\label{def:WeakComputableNr}
A number $x \in\RN$ is said to be weakly computable if there exist $a,b\in\Sigma_{1}$ such that $x = a-b$.
The set of all weakly computable numbers is denoted by $\Cweak$.
\end{definition}

\begin{remark}
$\Sigma_{1}$, $\Pi_{1}$, $\Csemi$ and $\Delta_{1}$ are proper subsets of $\Cweak$. Note that $\Cweak$ still belongs to the first layer of the Zheng--Weihrauch hierarchy.
\end{remark}

% ===== The second Layer of Zheng-Weihrauch =====
\subsection{The second layer in the Zheng--Weihrauch hierarchy}

A computable number $x$ was defined as having two particular properties, namely it has to be the limit of a computable sequence of rational numbers and the convergence has to be effective (cf. Def.~\ref{def:CompNumber}).
If one drops the second requirement, one obtains the fundamental class of non-computable functions in the second layer of the Zheng--Weihrauch hierarchy:

\begin{definition}[Recursively approximable numbers]
\label{def:RecursiveApprox}
A number $x \in \RN$ is said to be 
\emph{recursively approximable} if there exists a computable sequence $\left\{ x_{n} \right\}_{n\in\NN}$ of computable numbers that converges to $x$.
The set of all recursively approximable real numbers is denoted by $\Delta_{2}$. 
\end{definition}

It is clear from the definition that $\Delta_{1} \subset \Delta_{2}$, that $\Sigma_{1} \subset \Delta_{2}$, and that $\Pi_{1} \subset \Delta_{2}$.
Similarly as for $\Delta_{1}$, the set $\Delta_{2}$ can be written as the intersection of two sets.
To motivate the definition of these sets, let $\bsx = \left\{ x_{n} \right\}_{n\in\NN}$ be an arbitrary sequence of real numbers and define for every $N\in\NN$
\begin{equation*}
	\zeta_{N} = \inf \left\{ x_{k} : k\geq N \right\}
	\quad\text{and}\quad
	\xi_{N} = \sup \left\{ x_{k} : k\geq N \right\}\;.
\end{equation*}
Then $\left\{ \zeta_{N} \right\}_{N\in\NN}$ is a monotonically decreasing and $\left\{ \xi_{N} \right\}_{N\in\NN}$ a monotonically increasing sequence. 
It is well known that if the limit of $\left\{ \zeta_{N}\right\}$ and $\left\{ \xi_{N}\right\}$ are equal, i.e. if 
\begin{equation*}
	\lim_{N\to\infty} \zeta_{N} = \sup_{N\in\NN} \inf_{k\geq N} x_{k} = \liminf_{n\to\infty} x_{n}
	= \limsup_{n\to\infty} x_{n} = \inf_{N\in\NN} \sup_{k\geq N} x_{k} = \lim_{N\to\infty} \xi_{N} = x
\end{equation*}
then also $\left\{ x_{n} \right\}_{n\in\NN}$ converges (to the same limit $x$).
To make this result effective, Zheng and Weihrauch defined, similarly as in the first layer, the following two sets

\begin{definition}[$\Sigma_{2}$ and $\Delta_{2}$]
A number $x \in\RN$ belongs to $\Sigma_{2}$ if and only if there exists a computable double sequence $\left\{ \zeta_{n,k} \right\}_{n,k\in\NN} \subset\RNc$ such that
\begin{equation*}
	x = \textstyle\sup_{n\in\NN} \inf_{k\in\NN} \zeta_{n,k}\,.
\end{equation*}
A number $x \in\RN$ belongs to $\Pi_{2}$ if and only if there exists a computable double sequence $\left\{ \xi_{n,k} \right\}_{n,k\in\NN}\subset\RNc$ such that
\begin{equation*}
		x = \textstyle\inf_{n\in\NN} \sup_{k\in\NN} \xi_{n,k}\,.
\end{equation*}
\end{definition}

The following result \cite{ZhengWeihrauch_MathLog01} provides an effective version of the statement that $\left\{ x_{n} \right\}_{n\in\NN}$ converges if and only if $\lim\sup_{n\to\infty} x_{n} = \liminf_{n\to\infty} x_{n}$.

\begin{proposition}
\label{prop:Delta2isUnion}
A number $x \in\RN$ belongs to $\Delta_{2}$ if and only if $x \in \Sigma_{2} \cap \Pi_{2}$, that is $\Delta_{2} = \Sigma_{2} \cap \Pi_{2}$.
\end{proposition}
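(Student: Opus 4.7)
The plan is to prove the two inclusions separately.

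For the forward direction $\Delta_{2}\subseteq \Sigma_{2}\cap \Pi_{2}$, given $x\in\Delta_{2}$ with a witnessing computable sequence $\{x_{n}\}_{n\in\NN}\subset\RNc$ converging (non-effectively) to $x$, I would exploit the identity $x=\liminf_{n} x_{n}=\limsup_{n} x_{n}$. Setting $\zeta_{n,k}:=x_{n+k}$ produces a computable double sequence in $\RNc$ with $\inf_{k}\zeta_{n,k}=\inf_{j\geq n} x_{j}$, so $\sup_{n}\inf_{k}\zeta_{n,k}=\liminf_{n} x_{n}=x$, giving $x\in\Sigma_{2}$. Taking $\xi_{n,k}:=x_{n+k}$ analogously yields $\inf_{n}\sup_{k}\xi_{n,k}=\limsup_{n} x_{n}=x$, hence $x\in\Pi_{2}$.

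For the reverse direction $\Sigma_{2}\cap\Pi_{2}\subseteq\Delta_{2}$, which is the substantive part, I would assume $x=\sup_{n}\inf_{k}\zeta_{n,k}=\inf_{n}\sup_{k}\xi_{n,k}$ and introduce the one-sided limits $\alpha_{n}:=\inf_{k}\zeta_{n,k}$ and $\beta_{n}:=\sup_{k}\xi_{n,k}$, so that $\alpha_{n}\leq x\leq\beta_{n}$ for every $n$ with $\sup_{n}\alpha_{n}=x=\inf_{n}\beta_{n}$. The computable partial approximants $L_{n,K}:=\min_{k\leq K}\zeta_{n,k}$ and $U_{n,K}:=\max_{k\leq K}\xi_{n,k}$ satisfy $L_{n,K}\downarrow\alpha_{n}$ and $U_{n,K}\uparrow\beta_{n}$ as $K\to\infty$. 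To recover monotonicity in the outer index, I would pass to the envelopes $\widetilde{\alpha}_{n}:=\max_{j\leq n}\alpha_{j}$ and $\widetilde{\beta}_{n}:=\min_{j\leq n}\beta_{j}$, which satisfy $\widetilde{\alpha}_{n}\uparrow x$ and $\widetilde{\beta}_{n}\downarrow x$ and still admit computable witnesses $\widetilde{L}_{n,K}:=\max_{j\leq n}L_{j,K}\downarrow\widetilde{\alpha}_{n}$ and $\widetilde{U}_{n,K}:=\min_{j\leq n}U_{j,K}\uparrow\widetilde{\beta}_{n}$.

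For each $m\in\NN$, I would construct $y_{m}$ by a dovetailed search: enumerating tuples $(n,K,K',N)$ with $n\geq m$ in some recursive order, compute rational approximations to $\widetilde{L}_{n,K}$ and $\widetilde{U}_{n,K'}$ of precision $2^{-N}$, and stop at the first tuple whose approximations certify $|\widetilde{L}_{n,K}-\widetilde{U}_{n,K'}|<2^{-m}$; then take $y_{m}$ to be this rational approximation of $\widetilde{L}_{n,K}$. The main obstacle is to justify both termination of the search and the non-effective convergence $y_{m}\to x$. Termination holds because $\widetilde{\beta}_{n}-\widetilde{\alpha}_{n}\to 0$: for $n\geq m$ large enough, and then $K,K',N$ sufficiently large, the required inequality becomes verifiable on rationals. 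Convergence follows from the sandwich: for the selected tuple, $\widetilde{\alpha}_{n}\leq\widetilde{L}_{n,K}\leq\widetilde{U}_{n,K'}+2^{-m}\leq\widetilde{\beta}_{n}+2^{-m}$ and $\widetilde{\alpha}_{n}\leq x\leq\widetilde{\beta}_{n}$, so $|y_{m}-x|\leq(\widetilde{\beta}_{n}-\widetilde{\alpha}_{n})+2\cdot 2^{-m}$; using $n\geq m$ and monotonicity, this is bounded by $(\widetilde{\beta}_{m}-\widetilde{\alpha}_{m})+2\cdot 2^{-m}$, which tends to zero (non-effectively, exactly the flavour needed for $\Delta_{2}$). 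The crucial design choice is the constraint $n\geq m$ in the search, since without it the procedure could lock onto a tuple with small $|\widetilde{L}-\widetilde{U}|$ but at a small index $n$ where $\widetilde{\beta}_{n}-\widetilde{\alpha}_{n}$ is large, producing a sequence that fails to converge to $x$.
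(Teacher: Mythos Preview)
The paper does not actually supply a proof of this proposition; it merely quotes the result from \cite{ZhengWeihrauch_MathLog01}. So there is no in-paper argument to compare against, and your proposal is in effect a self-contained proof of the cited fact.

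Your argument is sound. The forward inclusion via $\zeta_{n,k}=\xi_{n,k}=x_{n+k}$ and $\liminf=\limsup=\lim$ is the standard one-line verification. For the substantive inclusion $\Sigma_{2}\cap\Pi_{2}\subseteq\Delta_{2}$, your passage to the monotone envelopes $\widetilde{\alpha}_{n}\uparrow x$, $\widetilde{\beta}_{n}\downarrow x$ together with their computable finite-stage witnesses $\widetilde{L}_{n,K}\downarrow\widetilde{\alpha}_{n}$, $\widetilde{U}_{n,K'}\uparrow\widetilde{\beta}_{n}$ is exactly the right reduction, and the dovetailed search over $(n,K,K',N)$ with the constraint $n\geq m$ is the correct mechanism. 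Two small points worth making explicit in a write-up: first, the certification test should be stated as a strict rational inequality of the form $|\ell-u|+2\cdot 2^{-N}<2^{-m}$ on the rational approximants $\ell,u$, which both guarantees the intended real inequality and forces $2^{-N}<2^{-m}$, so the precision term is automatically absorbed into your final bound; second, the chain you wrote, $\widetilde{\alpha}_{n}\leq\widetilde{L}_{n,K}$ and $\widetilde{U}_{n,K'}\leq\widetilde{\beta}_{n}$, is what pins $\widetilde{L}_{n,K}$ into $[\widetilde{\alpha}_{n},\widetilde{\beta}_{n}+2^{-m}]$ and hence yields $|y_{m}-x|\leq(\widetilde{\beta}_{m}-\widetilde{\alpha}_{m})+2\cdot 2^{-m}\to 0$. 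Your remark that the constraint $n\geq m$ is essential (to prevent the search from locking onto a shallow index with large gap $\widetilde{\beta}_{n}-\widetilde{\alpha}_{n}$) is well taken and is precisely the idea that makes the construction work.
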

Proposition~\ref{prop:Delta2isUnion} is similar to Proposition~\ref{prop:Delta1isUnion} but for the second layer in the hierarchy.
As in the first layer, we have $\Delta_{2} \subsetneq \Sigma_{2}$ and $\Delta_{2} \subsetneq \Pi_{2}$.

% =================================================================
% ========== The Dirichlet problem on the unit Disk ===============
% =================================================================
\section{The Classical Dirichlet Problem on $\D$ -- Problem Statement}
\label{sec:ProblemStatment}

Let $f \in \C(\dD)$ be a real valued continuous function on $\dD$. 
The following classical problem, which goes back to \emph{George Green}, is of great importance in many areas of science and engineering:
Find $F : \DD\to \RN$ such that
\begin{equation}
\label{equ:DirProblem}
\begin{aligned}
	\big( \Delta F \big)(z) = 0\,, \quad&\text{for all}\ z\in\D\\
	\text{and}\qquad
	F(\zeta) = f(\zeta)\,,\quad&\text{for all}\ \zeta\in\dD\;.
\end{aligned}
\end{equation}
So one looks for a function $F$ that is harmonic in $\D$ and that coincides with the given function $f$ on the boundary $\dD$ of $\D$.
Historically, there are mainly two different ways to solve \eqref{equ:DirProblem}.

\subsection{Solution by Poisson integral}
\label{sec:DP_PoissionIntegral}

The first solution is due to \emph{Gustav Lejeune Dirichlet} who observed that the unique $F$ that satisfies the two conditions \eqref{equ:DirProblem} is given by the \emph{Poisson integral} of $f$, i.e.
\begin{equation}
\label{equ:SolutionPoissionIntegral}
	F(r\E^{\I\theta})
	= \left( \Op{P}_{r} f\right)(\E^{\I\theta}) 
	= \frac{1}{2\pi}\int^{\pi}_{-\pi} f(\E^{\I\tau})\, P_{r}(\theta - \tau)\,\d\tau\,,
	\quad 
\end{equation}
for $0\leq r < 1$ and $-\pi\leq \theta < \pi$,
with the \emph{Poisson kernel}
\begin{equation*}
	P_{r}(\theta) = \frac{1 - r^{2}}{1 - 2 r \cos(\theta) + r^{2}}\,,
\end{equation*}
and which satisfies $\lim_{r\to 1} F(r\E^{\I\theta}) = f(\E^{\I\theta})$ for all $\theta\in [-\pi,\pi)$.
If $f$ is written as a harmonic series as in \eqref{equ:FourierSeries} then its Poisson integral has the simple form
\begin{equation}
\label{equ:PoissionSeries}
	F(r\E^{\I\theta}) = \left( \Op{P}_{r} f\right)(\E^{\I\theta}) =
	\frac{a_{0}(f)}{2} + \sum^{\infty}_{n=1} r^{n} \left[ a_{n}(f)\cos(n\theta) + b_{n}(f)\sin(n\theta) \right].
\end{equation}
In honor of his contribution in the study of problem \eqref{equ:DirProblem}, it is now known as the \emph{Dirichlet problem}.

\subsection{Solution by Dirichlet's principle}
\label{sec:DP_varinaionalApproach}

\emph{Dirichlet} and \emph{Lord Kelvin} also proposed to solve \eqref{equ:DirProblem} by a variational method based on the minimization of \emph{Dirichlet's energy}:
Let $f\in\C(\dD)$ be a continuous function on $\dD$ and let
\begin{equation*}
	\M(f) = \left\{ u : \DD \to \RN\ :\ u\in\C(\DD) \cap \C^{1}(\D), u|_{\dD} = f \right\}
\end{equation*}
be the set of all real functions $u$ on $\DD$ that are continuous on $\DD$, continuously differentiable on $\D$, and that coincide with $f$ on $\dD$.
Now, \emph{Dirichlet's principle} states (see, e.g., \cites{Evans_PDEs,Mikhlin_VariationalMethods}) that the function $F$ that solves \eqref{equ:DirProblem} is obtained by finding the function in $\M(f)$ that has the smallest Dirichlet energy \eqref{equ:DirichletEnergy}.
So in order to find the solution of \eqref{equ:DirProblem} one may solve the following minimization problem
\begin{equation}
\label{equ:MinProblem}	
	F = \arg \inf_{u \in \M(f)} \Energy(u)\;.
\end{equation}
If we denote by 
\begin{equation*}
	\underline{\Energy}(f) = \inf_{u \in \M(f)} \Energy(u)
\end{equation*}
the minimum Dirichlet energy of functions in $\M(f)$, then it turns out that
\begin{equation}
\label{equ:MinEnergy}
	\underline{\Energy}(f) = \Energy(F) = \Op{E}(f)	
\end{equation}
i.e. the minimal Dirichlet energy $\underline{\Energy}(f)$ of functions in $\M(f)$ is equal to the Dirichlet energy $\Energy(f)$ of the boundary function $f$ and
there exists exactly one $F \in \M(f)$ that satisfies $\Energy(F) = \underline{\Energy}(f)$.
This minimizer is a solution of~\eqref{equ:DirProblem}.

% ---------- Problem statement ----------
\subsection{Problem statement}

Subsections~\ref{sec:DP_PoissionIntegral} and \ref{sec:DP_varinaionalApproach} described two approaches to solve \eqref{equ:DirProblem}.
From the analytical side, both ways give the same solution.
However, there might exist considerable differences as far as the computability of the algorithmic solutions is concerned.
In the following two sections we will show that the choice of which method is used to solve \eqref{equ:DirProblem} should depend on the set of boundary functions.
To this end, we notice first that \eqref{equ:MinEnergy} implies that the given boundary function $f\in\C(\dD)$ needs to have finite Dirichlet energy $\Energy(f) < \infty$, i.e. $f$ needs to be in $H^{1/2}(\dD)$.
Otherwise \eqref{equ:MinProblem} will not have a solution.
In addition, in order to pass the given boundary function $f\in\C(\dD) \cap H^{1/2}(\dD)$ to a numerical algorithm that solves \eqref{equ:MinProblem} or that computes the Poisson integral \eqref{equ:SolutionPoissionIntegral}, the function $f$ needs to be computable in some sense.
For the boundary function of the Dirichlet problem, it is natural to require either
\begin{equation*}
	f \in \Cc(\dD) \cap H^{1/2}(\dD)
	\quad\text{or}\quad
	f \in \C(\dD) \cap H^{1/2}_{\mathrm{c}}(\dD).
\end{equation*}
Thus one requires that $f$ can be effectively approximated by trigonometric polynomials either in the uniform norm or in the (Dirichlet) energy norm $\left\|\cdot\right\|_{H^{1/2}}$.
Both assumptions are not equivalent, i.e. there exist functions $f \in \Cc(\dD) \cap H^{1/2}(\dD)$ that do not belong to $\C(\dD) \cap H^{1/2}_{\mathrm{c}}(\dD)$ and vice versa.

In Section~\ref{sec:MainMinimizationProblem} we will show that there exist functions $f \in \Cc(\dD) \cap H^{1/2}(\dD)$ such that the minimum $\underline{\Energy}(f)$ of the minimization problem~\eqref{equ:MinProblem} is not a computable number.
For such a function, any numerical algorithm that attempts to solve \eqref{equ:MinProblem} will not be able to truncate the iteration effectively if a desired error bound is reached.
Thus solving the Dirichlet problem effectively by the optimization method \eqref{equ:MinProblem} is not possible for all functions in the class $\Cc(\dD) \cap H^{1/2}(\dD)$ of boundary functions.
In addition, Section~\ref{sec:MainMinimizationProblem} provides a complete characterization of the possible values $\underline{\Energy}(f)$ in terms of the Zheng--Weihrauch hierarchy.

Section~\ref{sec:MainDirichletIntegral} then studies the computation of the Poisson integral \eqref{equ:SolutionPoissionIntegral} for arbitrary functions from the set $\C(\dD) \cap H^{1/2}_{\mathrm{c}}(\dD)$.
It is shown that there exist functions $f\in \C(\dD) \cap H^{1/2}_{\mathrm{c}}(\dD)$ such that $f(\E^{\I\theta})$ is not computable at some computable position $\theta$. 
Thus solving the Dirichlet problem effectively based on the numerical integration of the Poisson integral is not possible for all function in the set $\C(\dD) \cap H^{1/2}_{\mathrm{c}}(\dD)$.
In addition, we derive lower and upper bounds on the degree of non-computability of the values $f(\E^{\I\theta})$ in the Zheng--Weihrauch hierarchy.

% =======================================================
% ========== Computation via Minimization ===============
% =======================================================
\section{Computation via the Minimization of Dirichlet energy}
\label{sec:MainMinimizationProblem}

For any given continuous boundary function $f\in\C(\dD)$ with finite Dirichlet energy $\Energy(f) < \infty$ the solution of the Dirichlet problem \eqref{equ:DirProblem} can be obtained by solving the minimization problem \eqref{equ:MinProblem}.
This section asks whether this minimization problem can always be effectively solved on a Turing machine.
We are going to show that this is not the case.
In fact there exist computable continuous boundary functions of finite Dirichlet energy such that the corresponding minimum energy $\underline{\Energy}(f)$ of \eqref{equ:DirProblem} is not a computable number.

\begin{theorem}
\label{thm:MinEnergInSigma1}
Let $f\in\Cc(\dD)$ with $\Energy(f) < \infty$ be arbitrary. Then 
\begin{equation}
\label{equ:Thm_inf1}
	\underline{\Energy}(f) = \inf_{u\in\M(f)} \Energy(u) \in \Pu\,.
\end{equation}
Conversely, to every $\alpha\in\Pu$, $\alpha > 0$, there exists an $f_{*} \in\Cc(\dD)$ with
\begin{equation}
\label{equ:Thm_inf2}
	\underline{\Energy}(f_{*}) = \inf_{u\in\M(f_{*})} \Energy(u) = \alpha\,.
\end{equation}
\end{theorem}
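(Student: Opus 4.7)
The plan is to handle the two directions separately, using Dirichlet's principle \eqref{equ:MinEnergy} to replace $\underline{\Energy}(f)$ by the boundary Dirichlet energy throughout. For the forward direction \eqref{equ:Thm_inf1}, the key observation is that for $f \in \Cc(\dD)$ the Fourier coefficients $\{a_n(f)\}_n$, $\{b_n(f)\}_n$ are computable sequences of computable reals. This follows by effectively approximating $f$ in the uniform norm by trigonometric polynomials $\varphi_m$ (whose Fourier data can be read off directly), together with the elementary bound $|a_n(f) - a_n(\varphi_m)| \leq 2\|f - \varphi_m\|_{\infty}$ (uniformly in $n$, analogously for $b_n$). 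Then the partial sums
\begin{equation*}
S_N = \tfrac{1}{2}\sum_{n=1}^{N} n\bigl(|a_n(f)|^2 + |b_n(f)|^2\bigr)
\end{equation*}
form a computable, monotonically non-decreasing sequence of computable reals, and since $\Energy(f) < \infty$ they converge to $\Energy(f) = \underline{\Energy}(f)$. This witnesses $\underline{\Energy}(f) \in \Pu$.

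For the converse direction \eqref{equ:Thm_inf2}, given $\alpha \in \Pu$ with $\alpha > 0$, I would fix a computable non-decreasing sequence $\{\zeta_n\}_{n\in\NN}$ of computable reals with $\zeta_n \nearrow \alpha$ and (after shifting by prepending a $0$) $\zeta_0 = 0$, and set $\delta_n = \zeta_{n+1} - \zeta_n \geq 0$, so that $\sum_n \delta_n = \alpha$. The construction will be a \emph{lacunary} cosine series
\begin{equation*}
f_{*}(\E^{\I\theta}) = \sum_{n=0}^{\infty} c_n \cos(k_n \theta)\,, \qquad c_n = \sqrt{2\delta_n/k_n}\,,
\end{equation*}
where $\{k_n\}$ is a strictly increasing sequence of positive integers chosen so rapidly that $|c_n| \leq 2^{-n}$. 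Since each $\delta_n$ is a computable real, a rational upper bound $\delta_n^{+} \geq \delta_n$ is available, and the choice $k_n \geq 2\cdot 4^{n}(\delta_n^{+} + 1)$ yields $|c_n| \leq 2^{-n}$ (adopting $c_n = 0$ when $\delta_n = 0$). The truncations $\sum_{n\leq N} c_n \cos(k_n\theta)$ are then computable trigonometric polynomials converging to $f_*$ effectively in $\|\cdot\|_\infty$, so $f_* \in \Cc(\dD)$. Distinctness of the $k_n$ makes $a_{k_n}(f_*) = c_n$ the only nonzero Fourier coefficients of $f_*$, and
\begin{equation*}
\Energy(f_*) = \tfrac{1}{2}\sum_{n} k_n c_n^2 = \sum_n \delta_n = \alpha\,,
\end{equation*}
which together with \eqref{equ:MinEnergy} delivers $\underline{\Energy}(f_*) = \alpha$.

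The main technical point is that $\alpha$ and the increments $\delta_n$ come only with left-approximations, so no uniform \emph{a priori} upper bound on $\delta_n$ (let alone on $\alpha$) is available. It is precisely the fact that each $\delta_n$ is a difference of two computable reals that yields the computable upper bound $\delta_n^{+}$, and this is what makes the choice of the lacunary frequencies $k_n$ effective and hence the uniform approximation of $f_*$ by its trigonometric truncations effective. Lacunarity makes the energy identity $\Energy(f_*) = \alpha$ exact, while the geometric decay $|c_n| \leq 2^{-n}$ simultaneously secures $f_* \in \Cc(\dD)$ and $\Energy(f_*) < \infty$.
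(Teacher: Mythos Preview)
Your proof is correct. The forward direction is exactly the paper's argument: computability of the Fourier coefficients of $f\in\Cc(\dD)$, then the partial energies $S_N$ form a computable monotone sequence witnessing $\underline{\Energy}(f)=\Energy(f)\in\Sigma_1$.

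For the converse your construction is genuinely different from the paper's, and in fact simpler. The paper builds fixed ``block'' polynomials $\varphi_m(\theta)=\cos(m^4\theta)\sum_{\ell=1}^{10}\frac{\cos(\ell\theta)}{\ell}$ with disjoint Fourier supports near $m^4$, normalizes them in $H^{1/2}$, and sets $f_*=\sum_m\sqrt{d_m}\,\widetilde\varphi_m$; the point of the blocks is that $\|\varphi_m\|_\infty$ stays bounded while $\|\varphi_m\|_{H^{1/2}}\asymp m^2$, so $\|\widetilde\varphi_m\|_\infty\lesssim m^{-2}$ gives uniform summability independently of the size of $d_m$. You instead take single frequencies $k_n$ chosen \emph{adaptively} from a computable upper bound $\delta_n^+$ so that $|c_n|=\sqrt{2\delta_n/k_n}\le 2^{-n}$. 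This buys you an immediate, fully explicit effective bound $\|f_*-f_N\|_\infty\le 2^{-N}$ and makes the energy identity $\Energy(f_*)=\sum_n\delta_n=\alpha$ a one-line computation. The paper's approach has the aesthetic advantage that the frequencies $k_m=m^4$ are fixed in advance and do not depend on $\alpha$, but the resulting convergence estimate carries a constant $C_1$ involving $\max_m\sqrt{d_m}$ that requires a word of justification; your adaptive choice sidesteps this entirely. One small point to make explicit in your write-up: ensure the $k_n$ are strictly increasing by taking, e.g., $k_n=\max\bigl(k_{n-1}+1,\lceil 2\cdot 4^n(\delta_n^+ +1)\rceil\bigr)$, so that the sequence is recursive and the Fourier supports are guaranteed disjoint.
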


\begin{proof} %[Proof of Theorem~\ref{thm:MinEnergInSigma1}]
Let $f \in\Cc(\dD)$ with $\Energy(f)<\infty$ be arbitrary.
We write $f$ as a harmonic series \eqref{equ:FourierSeries}.
Because $f \in \Cc(\dD)$, the corresponding Fourier coefficients $\left\{ a_{n}(f) \right\}_{n\in\NN}$ and $\left\{ b_{n}(f) \right\}_{n\in\NN}$ form computable sequences of computable numbers \cite{PourEl_Computability}.
If we define
\begin{equation*}	
	E_{N} = \textstyle\frac{1}{2}\sum^{N}_{n=1} n \big[ \left|a_{n}(f)\right|^{2} + \left| b_{n}(f)\right|^{2} \big]\,,
	\quad N\in\NN\,,
\end{equation*}
then $\left\{ E_{N} \right\}_{N\in\NN}$ is a computable sequence of computable numbers that satisfies
\begin{align*}
	E_{N+1} \geq E_{N}\ \text{for all}\ N\in\NN
	\quad\text{and}\quad
	\lim_{N\to\infty} E_{N} = \Energy(f)\,.
\end{align*}
Thus, $\Energy(f) \in \Pu$ and therefore \eqref{equ:MinEnergy} implies \eqref{equ:Thm_inf1}.

To prove the second statement of the theorem, we choose $\alpha\in\Pu$ with $\alpha > 0$ arbitrary and construct an $f_{*}\in\Cc(\dD)$ with $\Energy(f_{*}) = \alpha$.
Then statement \eqref{equ:Thm_inf2} follows again from \eqref{equ:MinEnergy}.
To construct such an $f_{*}$, we define first the computable sequence $\left\{ k_{m} \right\}_{m\in\NN}$ by $k_{m} = m^{4}$. 
Then for every $m\geq 1$, we define the rational trigonometric polynomials
\begin{equation}
\label{equ:DefPhi_m}
	\varphi_{m}(\theta) = \cos(k_{m} \theta) \sum^{10}_{\ell = 1} \frac{\cos(\ell \theta)}{\ell}\,,
	\quad \theta\in [-\pi,\pi)\,.
\end{equation}
Using standard trigonometric identities, we write $\varphi_{m}$ as a harmonic series 
\begin{equation*}
	\varphi_{m}(\theta) = \sum^{k_{m}+10}_{n=k_{m}-10} a_{n}(\varphi_{m}) \cos(n \theta)\,,
	\quad \theta\in [-\pi,\pi)\,,
\end{equation*}
with the Fourier coefficients
\begin{equation*}
	a_{n}(\varphi_{m}) = 
	\left\{\begin{array}{ccl}
	\frac{1}{2\,(n-k_{m})} & \text{:} & n > k_{m}\\[1ex]
	0 & \text{:} & n=k_{m}\\[1ex]
	\frac{1}{2\,(k_{m}-n)} & \text{:} & n < k_{m}
	\end{array}\right.\;.
\end{equation*}
Before we proceed, we collect some properties of the functions $\varphi_{m}$.
First we notice that the Dirichlet energy of $\varphi_{m}$ is equal to the square of its $H^{1/2}$-norm and it is given by
\begin{align}
\label{equ:NormPhi_m}
	\left\|\varphi_{m}\right\|^{2}_{H^{1/2}}
	&= \Energy(\varphi_{m})
	= \textstyle\frac{1}{2}\sum^{k_{m}+10}_{n=k_{m}-10} n \left| a_{n}(\varphi_{m}) \right|^{2}
	= \textstyle\frac{1}{8}\sum^{k_{m}+10}_{n=k_{m}-10} \frac{n}{(n-k_{m})^{2}}\nonumber\\
	&= \textstyle\frac{1}{8} \left( \sum^{k_{m}-1}_{n=k_{m}-10} \frac{n}{(n-k_{m})^{2}} + \sum^{k_{m}+10}_{n=k_{m}+1} \frac{n}{(n-k_{m})^{2}} \right)\nonumber\\
	&= \textstyle\frac{1}{8} \left( \sum^{10}_{r=1} \frac{k_{m}-r}{r^{2}} + \sum^{10}_{r=1} \frac{k_{m} + r}{r^{2}} \right)
	= \frac{k_{m}}{4} \sum^{10}_{r=1} \frac{1}{r^{2}}
	= C_{0}\, m^{4}
\end{align}
with the positive constant $C_{0} = \textstyle\frac{1}{4} \sum^{10}_{r=1} \frac{1}{r^{2}}$.
Moreover, the definition \eqref{equ:DefPhi_m} of $\varphi_{m}$ shows that
\begin{equation}
\label{equ:UniformBoundPhim}
	\left|\varphi_{m}(\theta)\right|
	\leq \textstyle\sum^{10}_{\ell=1} \frac{1}{\ell} =: C_{\varphi}
\end{equation}
for all $m\in\NN$ and all $\theta\in [-\pi,\pi)$.
For every $m\geq 1$, let
\begin{equation*}
	\mathfrak{F}_{m} = \left\{ k_{m}-10, \dots, k_{m}-1, k_{m}+1, \dots, k_{m}+10 \right\}
\end{equation*}
be the set of indices of the non-vanishing Fourier coefficients of $\varphi_{m}$.
Since 
$k_{m+1} = (m+1)^{4} = k_{m} + 4 m^{3} + 6 m^{2} + 4 m +1$, we see that $k_{m+1} - k_{m} \geq 15$ for all $m\geq 1$.
This implies that $\mathfrak{F}_{m+1} \cap \mathfrak{F}_{m} = \emptyset$ for all $m\geq 1$.
Thus the functions $\varphi_{n}$ and $\varphi_{m}$ with $n\neq m$ always have disjoint sets of non-vanishing Fourier coefficients and therefore
\begin{equation}
\label{equ:NormAdd}
	\left\|\varphi_{n} + \varphi_{m}\right\|^{2}_{H^{1/2}} = \left\|\varphi_{n}\right\|^{2}_{H^{1/2}} + \left\|\varphi_{m}\right\|^{2}_{H^{1/2}}\,.	
\end{equation}

After these remarks, we proceed with the proof and notice that because $\alpha\in\Pu$, there exists a sequence $\left\{ \alpha_{n}\right\}_{n\in\NN}$ with
\begin{equation*}
	\alpha_{n+1} \geq \alpha_{n}\ \text{for all}\ n\in\NN
	\qquad\text{and}\qquad
	\lim_{n\to\infty} \alpha_{n} = \alpha\,.
\end{equation*}
Without loss of generality, we assume $\alpha_{1} = 0$ and define
\begin{equation*}
	d_{n} = \alpha_{n+1} - \alpha_{n}\,,
	\qquad n\geq 1\;.
\end{equation*}
Then we define for every $K\in\NN$ the function
\begin{equation*}
	f_{K}(\E^{\I\theta}) = \textstyle\sum^{K}_{m=1} \sqrt{d_{m}}\, \widetilde{\varphi}_{m}(\theta)\,,
	\qquad \theta\in [-\pi,\pi)\,,
\end{equation*}
with the normalized functions $\widetilde{\varphi}_{m} = \varphi_{m} / \left\|\varphi_{m}\right\|_{H^{1/2}}$, and the function
\begin{equation*}
	f_{*}(\E^{\I\theta}) = \textstyle\sum^{\infty}_{m=1} \sqrt{d_{m}}\, \widetilde{\varphi}_{m}(\theta)\,,
	\qquad \theta\in [-\pi,\pi)\,.
\end{equation*}
We are going to show that $f_{K} \to f_{*}$ as $K\to\infty$ effectively and uniformly on $\dD$.
Indeed, for every $\theta \in [-\pi,\pi)$, we have
\begin{align*}
	\left|f_{*}(\E^{\I\theta}) - f_{K}(\E^{\I\theta}) \right|
	&= \left| \textstyle\sum^{\infty}_{m=K+1} \sqrt{d_{m}}\, \widetilde{\varphi}_{m}(\theta) \right|
	\leq \sum^{\infty}_{m=K+1} \frac{\sqrt{d_m}}{\left\|\varphi_{m}\right\|_{H^{1/2}}} \left|\varphi_{m}(\theta)\right|\\
	&\leq C_{\varphi}\, \max_{m>K}\sqrt{d_{m}} \sum^{\infty}_{m=K+1} \frac{1}{ \left\|\varphi_{m}\right\|_{H^{1/2}} }
	\leq C_{1} \sum^{\infty}_{m=K+1} \frac{1}{m^{2}}	
\end{align*}
with the constant $C_{1} = \frac{C_{\varphi}}{\sqrt{C_{0}}}\, \left( \max_{m>K}\sqrt{d_{m}} \right)$ and using the properties \eqref{equ:NormPhi_m} and \eqref{equ:UniformBoundPhim} of the functions $\varphi_{m}$ to obtain the last line. 
With the usual estimate
\begin{equation*}
	\sum^{\infty}_{m=K+1} \frac{1}{m^{2}} 
	= \lim_{N\to\infty} \sum^{N}_{m=K+1} \frac{1}{m^{2}}
	\leq \lim_{N\to\infty} \int^{N}_{K}\frac{\d x}{x^{2}}
	= \frac{1}{K}
\end{equation*}
we finally obtain
\begin{equation*}
	\left|f_{*}(\E^{\I\theta}) - f_{K}(\E^{\I\theta}) \right| 
	\leq C_{1}\, \frac{1}{K}\,,
	\quad\text{for every}\ K\in\NN\,,
\end{equation*}
and uniformly for all $\theta\in [-\pi,\pi)$.
This shows that the computable sequence $\left\{ f_{K} \right\}_{K\in\NN}$ of computable trigonometric polynomials converges uniformly to $f_{*}$.
Consequently, by Definition~\ref{def:ComputableBanachSpace}, $f_{*} \in \Cc(\dD)$.

Finally, we determine the $H^{1/2}$-norm of $f_{*}$.
Using \eqref{equ:NormAdd} and the definition of the sequence $\left\{ d_{m} \right\}_{m\in\NN}$, we get
\begin{align*}
	\left\|f_{*}\right\|^{2}_{H^{1/2}}
	&= \left\| \textstyle\sum^{\infty}_{m=1} \sqrt{d_{m}}\, \widetilde{\varphi}_{m} \right\|^{2}_{H^{1/2}}
	=  \textstyle\sum^{\infty}_{m=1} d_{m}\, \left\| \widetilde{\varphi}_{m} \right\|^{2}_{H^{1/2}}
	= \textstyle\sum^{\infty}_{m=1} d_{m}\\[1ex]
	&= \textstyle\lim_{N\to\infty} \sum^{N}_{m=1} (\alpha_{m+1} - \alpha_{m})
	= \lim_{N\to\infty} \alpha_{N+1} - \alpha_{1} 
	= \alpha\;.
\end{align*}
Since $\Energy(f_{*}) = \left\|f_{*}\right\|^{2}_{H^{1/2}}$, this finishes the proof of the theorem.
\end{proof}

Theorem~\ref{thm:MinEnergInSigma1} provides a complete characterization of the degree of non-computability of the Dirichlet energy $\underline{\Energy}(f)$ of solutions of the Dirichlet problem for continuous computable boundary functions $f\in\Cc(\dD)$.
Its first part shows that for every computable continuous boundary function $f$ the minimal energy $\underline{\Energy}(f)$ always belongs to $\Pu$.
Thus $\underline{\Energy}(f)$ is always the limit of a monotonically increasing, computable sequence $\left\{ \xi_{n} \right\}_{n\in\NN}$ of computable numbers.
In particular, $\underline{\Energy}(f)$ is recursively approximable. 
Nevertheless, $\underline{\Energy}(f)$ is generally not a computable number. This follows from the second part of Theorem~\ref{thm:MinEnergInSigma1}.
Indeed, since $\Delta_{1} \subsetneq \Sigma_{1}$, we can choose $\alpha \in \Sigma_{1}\backslash \Delta_{1}$.
Then Part~2 of Theorem~\ref{thm:MinEnergInSigma1} shows that there exists a computable continuous function $f \in \Cc(\dD)$ of finite Dirichlet energy $\Energy(f)$ but such that $\Energy(f) = \underline{\Energy}(f)$ is not a computable number:

\begin{corollary}
\label{cor:NonComputableMinimum}
There exist functions $f \in \Cc(\dD)$ with $\Energy(f) < \infty$ such the minimum of the variational formulation \eqref{equ:MinProblem} of the Dirichlet problem has a non-computable solution $\underline{\Energy}(f)$.
\end{corollary}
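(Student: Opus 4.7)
The plan is to read this off directly from Part~2 of Theorem~\ref{thm:MinEnergInSigma1}, using only the strict inclusion $\Delta_1 \subsetneq \Sigma_1$ in the first layer of the Zheng--Weihrauch hierarchy. This inclusion was already flagged in the remark following Proposition~\ref{prop:Delta1isUnion}: there exist left-computable reals that fail to be right-computable and therefore fail to be computable in the sense of Definition~\ref{def:CompNumber}.

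Concretely, I would first fix any witness $\alpha \in \Sigma_1 \setminus \Delta_1$ with $\alpha > 0$; such $\alpha$ exist, since $\Sigma_1$ is closed under addition of positive rationals, so any non-computable left-computable number can be translated to a positive one without leaving $\Sigma_1 \setminus \Delta_1$. Next I would invoke Part~2 of Theorem~\ref{thm:MinEnergInSigma1} applied to this $\alpha$ to produce a function $f_{*} \in \Cc(\dD)$ with $\underline{\Energy}(f_{*}) = \alpha$. In particular $f_{*}$ has finite Dirichlet energy, so it lies in the class of boundary functions considered in the statement of the corollary.

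To conclude, I would argue by contradiction: if $\underline{\Energy}(f_{*})$ were a computable real, then $\alpha = \underline{\Energy}(f_{*})$ would belong to $\Delta_1$, contradicting the choice of $\alpha$. Hence $\underline{\Energy}(f_{*})$ is not computable, which is the assertion of the corollary. There is essentially no obstacle to overcome at this stage; the substantive work of realising a prescribed left-computable target energy as the Dirichlet energy of a computable continuous boundary function is entirely encapsulated in Theorem~\ref{thm:MinEnergInSigma1}, and the only external input required is the non-triviality of the first layer of the Zheng--Weihrauch hierarchy.
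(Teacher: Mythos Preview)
Your proposal is correct and matches the paper's own argument essentially verbatim: the paper also simply picks $\alpha \in \Sigma_{1}\setminus\Delta_{1}$ and invokes Part~2 of Theorem~\ref{thm:MinEnergInSigma1}. Your added remark about ensuring $\alpha>0$ via a rational shift is a harmless refinement the paper leaves implicit.
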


\begin{remark}
\label{rem:NotH12computable}
If $f \in \Cc(\dD)$ is a function as in Corollary~\ref{cor:NonComputableMinimum} then its Dirichlet energy $\Energy(f) = \underline{\Energy}(f)$ is not a computable number.
Consequently, \eqref{equ:H12_norm} shows that $\left\|f\right\|_{H^{1/2}} \notin \Delta_{1}$.
Therefore, $f$ cannot be an $\H12$-computable function.
So for all boundary functions satisfying the condition of Corollary~\ref{cor:NonComputableMinimum}, the solution of the corresponding Dirichlet problem cannot be effectively computed on a Turing machine using the variational approach discussed on Section~\ref{sec:DP_varinaionalApproach}.
\end{remark}

If $f$ is a boundary function as in Corollary~\ref{cor:NonComputableMinimum},
then even although $\underline{\Energy}(f)$ is recursively approximable, it is only recursively approximable from below.
There does not exist any monotonically decreasing computable sequence that converges from above to $\underline{\Energy}(f)$.
In view of the variational formulation \eqref{equ:MinProblem} of the Dirichlet problem, this has important consequences, because most numerical algorithms that try to solve the convex minimization problem \eqref{equ:MinProblem} will produce a sequence $\left\{ x_{n} \right\}_{n\in\NN}$ that converges to $\underline{\Energy}(f)$ from above.
Then Theorem~\ref{thm:MinEnergInSigma1} implies that any such approximation sequence cannot be computable.

\begin{corollary}
\label{cor:NoSequenceOfUpperBounds}
Let $f \in \Cc(\dD)$ with $\Energy(f) < \infty$ as in Corollary~\ref{cor:NonComputableMinimum} and let $\bsx = \left\{ x_{n} \right\}_{n\in\NN}$ be a monotonically decreasing sequence with $\lim_{n\to\infty} x_{n} = \underline{\Energy}(f)$.
Then $\bsx$ is not a computable sequence of computable numbers.
\end{corollary}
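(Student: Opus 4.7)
The plan is a direct proof by contradiction that exploits the characterization $\Delta_{1} = \Sigma_{1} \cap \Pi_{1}$ from Proposition~\ref{prop:Delta1isUnion}. I would suppose, toward a contradiction, that $\bsx = \{x_{n}\}_{n\in\NN}$ were a computable sequence of computable numbers. The hypothesis that $\bsx$ is monotonically decreasing with $\lim_{n\to\infty} x_{n} = \underline{\Energy}(f)$ then exactly matches the defining condition \eqref{equ:propXi} for the class $\Pd$ of right-computable reals, so it would immediately follow that $\underline{\Energy}(f) \in \Pd = \Pi_{1}$.

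On the other hand, the first part of Theorem~\ref{thm:MinEnergInSigma1}, applied to $f \in \Cc(\dD)$ with $\Energy(f) < \infty$, already guarantees that $\underline{\Energy}(f) \in \Pu = \Sigma_{1}$. Combining the two memberships via Proposition~\ref{prop:Delta1isUnion} yields $\underline{\Energy}(f) \in \Sigma_{1} \cap \Pi_{1} = \Delta_{1}$, i.e., $\underline{\Energy}(f)$ would be a computable number. This contradicts the very choice of $f$ via Corollary~\ref{cor:NonComputableMinimum}, which fixes $f$ so that $\underline{\Energy}(f)$ is non-computable. Hence no such $\bsx$ can exist.

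There is essentially no technical obstacle: the argument is a two-line consequence of what has already been proved. The only point that needs care is verifying that the data in the hypothesis — a monotonically decreasing computable sequence of computable reals converging to $\underline{\Energy}(f)$ — matches verbatim the definition of $\Pi_{1}$ used in Section~\ref{sec:ZWHierarchy}, so that the right-computability witness is genuine and not merely an approximation from above in some weaker sense.
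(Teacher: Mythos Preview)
Your argument is correct and is exactly the reasoning the paper intends: the corollary is stated without an explicit proof, but the identical contradiction via $\Pi_{1}\cap\Sigma_{1}=\Delta_{1}$ is spelled out in the proof of the theorem immediately following Corollary~\ref{cor:NoSequenceOfUpperBounds}. There is nothing to add.
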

So in order to construct any numerical procedure to solve the variational problem \eqref{equ:Thm_inf1}, one would like to construct a sequence $\left\{ F_{n} \right\}_{n\in\NN} \in \M(f)$ with
$\Energy(F_{n+1}) \leq \Energy(F_{n})$, for all $n\in\NN$ and with
\begin{align}
\label{equ:convergence}
	\lim_{n\to\infty} \Energy(F_{n}) &= \underline{\Energy}(f)\,.
\end{align} 
In fact, it would be sufficient that $\left\{ F_{n} \right\}_{n\in\NN}$ satisfies \eqref{equ:convergence}. The monotonicity of the sequence is not necessarily needed since from any sequence $\left\{ \Energy(F_{n}) \right\}_{n\in\NN}$ that satisfies \eqref{equ:convergence} we can always construct a monotonically decreasing subsequence $\left\{ x_{N}\right\}_{N\in\NN}$ by
\begin{equation}
\label{equ:Construction_xN}
	x_{N} = \min \big\{ \Energy(F_{n}) : n=1,2,\dots,N \big\}\,,
	\quad N\in\NN\,,
\end{equation}
that converges to $\underline{\Energy}(f)$ as $N\to\infty$.
The next theorem shows that there exist computable continuous functions $f_{*} \in \Cc(\dD)$ such that the set $\M(f_{*})$ does not contain any useful computable structure for the computation of $\underline{\Energy}(f_{*})$ in the sense of effective analysis. 

\begin{theorem}
Let $\alpha \in \Pu\backslash\Delta_{1}$, $\alpha >0$ be arbitrary and let $f_{*}\in\Cc(\dD)$ be the corresponding function as constructed in Theorem~\ref{thm:MinEnergInSigma1}.
Then there exists no computable sequence $\left\{ F_{n} \right\}_{n\in\NN} \subset \mathcal{M}(f_{*})$ of computable continuous functions such that $\Energy(F_{n}) \in \RNc$ for every $n\in\NN$ and such that
\begin{equation*}
	\lim_{n\to\infty} \Energy(F_{n}) = \underline{\Energy}(f_{*})\;.
\end{equation*}
\end{theorem}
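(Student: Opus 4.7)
The plan is to force the contradiction $\alpha \in \RNc$ via Proposition~\ref{prop:Delta1isUnion}. Suppose toward contradiction that such a sequence $\{F_n\}_{n\in\NN} \subset \M(f_*)$ exists. By Theorem~\ref{thm:MinEnergInSigma1}, $\underline{\Energy}(f_*) = \alpha$, and since every $F_n \in \M(f_*)$, Dirichlet's principle gives $\Energy(F_n) \geq \alpha$ for all $n$. Together with the assumed convergence $\Energy(F_n) \to \alpha$, this yields a sequence of computable reals approaching $\alpha$ from above and bounded below by $\alpha$.

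Next, following the construction in \eqref{equ:Construction_xN}, I would pass to the monotonically decreasing sequence
\begin{equation*}
	x_N = \min\big\{ \Energy(F_1), \Energy(F_2), \dots, \Energy(F_N) \big\}\,, \quad N\in\NN\,.
\end{equation*}
Since $\min(a,b) = \tfrac{1}{2}\big(a + b - |a-b|\big)$ is a computable operation on $\RNc$, and since the hypotheses on $\{F_n\}_{n\in\NN}$ and on the associated energies yield that $\{\Energy(F_n)\}_{n\in\NN}$ is a computable sequence of computable numbers, $\{x_N\}_{N\in\NN}$ is itself a computable, monotonically decreasing sequence of computable numbers with $\lim_{N\to\infty} x_N = \alpha$. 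Hence $\alpha \in \Pd$. Combined with $\alpha \in \Pu$ (by choice of $\alpha$), Proposition~\ref{prop:Delta1isUnion} gives $\alpha \in \RNc$, contradicting $\alpha \in \Pu \setminus \RNc$.

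The main difficulty is the step that turns the pointwise assumption ``$\Energy(F_n) \in \RNc$ for every $n$'' into a genuinely \emph{uniform} computable enumeration of $\{\Energy(F_n)\}_{n\in\NN}$. In other words, one needs a recursive procedure that, given $n$ and a target accuracy $2^{-M}$, outputs a rational approximation of $\Energy(F_n)$ within that accuracy. Once this uniformity is extracted from the computable structure on the function sequence $\{F_n\}_{n\in\NN}$ (for instance via Axiom~3 applied in an appropriate Banach space on which $\Energy$ is essentially a squared norm), the remaining Zheng--Weihrauch arithmetic is immediate and the proof is complete.
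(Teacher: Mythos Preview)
Your proposal is correct and follows essentially the same route as the paper's proof: assume such a sequence exists, use $F_n\in\M(f_*)$ to get $\Energy(F_n)\geq\underline{\Energy}(f_*)=\alpha$, pass to the running minimum \eqref{equ:Construction_xN} to obtain a computable monotonically decreasing sequence converging to $\alpha$, conclude $\alpha\in\Pi_1$ and hence $\alpha\in\Sigma_1\cap\Pi_1=\Delta_1$ via Proposition~\ref{prop:Delta1isUnion}, contradicting the hypothesis. Your explicit flag about extracting a \emph{uniformly} computable sequence $\{\Energy(F_n)\}_{n\in\NN}$ from the hypotheses is a point the paper's proof also passes over without comment; the paper simply asserts that \eqref{equ:Construction_xN} yields a computable sequence.
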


\begin{proof}
Let $\alpha \in \Pu$, $\alpha >0$ but $\alpha\notin\Delta_{1}$ be arbitrary.
Then there exists a computable sequence $\left\{ \alpha_{n} \right\}_{n\in\NN}$ of computable numbers that is monotonically increasing and converges to $\alpha$.
Assume there exists a computable sequence $\left\{ F_{n} \right\}_{n\in\NN} \subset \mathcal{M}(f_{*})$ of computable continuous functions that satisfies \eqref{equ:convergence}.
Then we have $\Energy(F_{n}) \geq \underline{\Energy}(f_{*})$ for all $n\in \NN$ and so \eqref{equ:Construction_xN} defines a monotonically decreasing, computable sequence of computable numbers that converges to $\underline{\Energy}(f_{*}) = \alpha$. This implies that $\alpha\in\Pi_{1}$ and consequently $\alpha \in  \Sigma_{1} \cap \Pi_{1} = \Delta_{1}$. This contradicts the assumption $\alpha\notin\Delta_{1}$ and thus proves the theorem.
\end{proof}

Finally, we would like to discuss the statement of Theorem~\ref{thm:MinEnergInSigma1} in some more detail.
According to the second part of the theorem, if $\alpha \in \Sigma_{1}\backslash\RNc$ then the Dirichlet energy $\underline{\Energy}(f_{*})$ of the corresponding function $f_{*}$ is not a computable number.
However, $f_{*}$ is a computable continuous functions.
So according to the definition of the computable structure in $\Cc(\dD)$ (cf. Definition~\ref{def:ComputableBanachSpace}), there exists a computable sequence $\left\{ \varphi^{*}_{n} \right\}_{n\in\NN}$ of trigonometric polynomials that effectively converges to $f_{*}$ in the maximum norm, i.e. such that
\begin{equation*}
	\left\| f_{*} - \varphi^{*}_{n}\right\|_{\infty} < 2^{-n}\,,
	\qquad\text{for all}\ n\in\NN\,.
\end{equation*}
The trigonometric polynomials $\varphi^{*}_{n}$ have the form \eqref{equ:TrigPoly}, with the sequences of Fourier coefficients $\left\{ a_{n}(k) \right\}_{k\in\NN}$ and $\left\{ b_{n}(k) \right\}_{k\in\NN}$ that are computable sequences of computable numbers.
Therefore it is clear that $\left\{ \underline{E}(\varphi^{*}_{n}) \right\}_{n\in\NN}$, with
\begin{equation*}
	\underline{E}(\varphi^{*}_{n}) = \frac{1}{2}\sum^{d(n)}_{k=1} n\left( \left|a_{n}(k)\right|^{2} + \left|n_{n}(k)\right|^{2} \right)
\end{equation*}
 is a computable sequence of computable numbers.
Moreover, by the continuity of the energy functional (i.e. by the continuity of the $\H12$-norm), we have
\begin{equation}
\label{equ:ConvergenceInEnergy}
	\lim_{n\to\infty} \underline{\Energy}(f_{*} - \varphi^{*}_{n})
	= \underline{\Energy}(f_{*}) - \lim_{n\to\infty} \underline{\Energy}(\varphi^{*}_{n}) = 0
\end{equation}
which shows that $\left\{ \underline{E}(\varphi^{*}_{n}) \right\}_{n\in\NN}$ converges to $\underline{\Energy}(f_{*})$.
However, since according to Theorem~\ref{thm:MinEnergInSigma1}, $\underline{\Energy}(f_{*})$ is not a computable number, this convergence cannot be effective.
Thus the computable sequence of computable numbers $\left\{ \underline{E}(\varphi^{*}_{n}) \right\}_{n\in\NN}$ converges to $\underline{\Energy}(f_{*})$ but it is impossible to effectively control the approximation error of this convergence.
 
The previous arguments show that $\left\{ \varphi^{*}_{n} \right\}_{n\in\NN}$ is also a computable sequence in $\H12(\dD)$, and relation \eqref{equ:ConvergenceInEnergy} can equivalently be written as $\lim_{n\to\infty} \left\|f_{*} - \varphi^{*}_{n}\right\|_{\H12} = 0$.
Thus $\left\{ \varphi^{*}_{n} \right\}_{n\in\NN}$ converges to $f_{*}$ in the norm of $\H12(\dD)$. 
However, since this convergence is not effective in $\H12_{\mathrm{c}}(\dD)$ the function $f_{*}$ cannot be a $\H12$-computable function (cf. also Remark~\ref{rem:NotH12computable}).

% ==========================================================
% ========== Computation via Dirichlet Integral ============
% ==========================================================
\section{Computation via Integrating the Poisson Integral}
\label{sec:MainDirichletIntegral}

The Dirichlet problem \eqref{equ:DirProblem} has a unique solution if the boundary function $f$ is continuous and has finite Dirichlet energy.
Assume that $f$ can additionally be effectively approximated by trigonometric polynomials in the $H^{1/2}$-norm,
i.e. let $f \in \H12_{\mathrm{c}}(\dD) \cap \C(\dD)$ be a continuous functions on $\dD$ with finite Dirichlet energy.
Is it then true that $f$ is Banach--Mazur computable, i.e. is it true that $f(\E^{\I\theta}) \in \RNc$ for all computable numbers $\theta\in\RNc$? 
The negative answer follows immediately from the following theorem.

\begin{theorem}
\label{thm:MainThm2}
Let $\theta\in\RNc$ be arbitrary.\\
1) To every $\alpha \in \Cweak$ there exists an $f_{*} \in \H12_{\mathrm{c}}(\dD) \cap \C(\dD)$ with $f_{*}(\E^{\I\theta}) = \alpha$.\\
2) Every $f \in L^{2}_{\mathrm{c}}(\dD) \cap \C(\dD)$ satisfies $f(\E^{\I\theta}) \in \Delta_{2}$.
\end{theorem}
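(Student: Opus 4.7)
The plan is to prove the two statements separately, by an explicit construction for Part~1 and by a computable Fejér approximation for Part~2.

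For Part~1, the guiding observation is that $\H12(\dD)$ fails to embed into $L^{\infty}(\dD)$, so for the fixed point $\E^{\I\theta}$ there exist computable trigonometric polynomials of height exactly $1$ at $\E^{\I\theta}$ but with arbitrarily small $\H12$-norm. A concrete choice is
\begin{equation*}
	g_K(\theta')
	= \frac{1}{H_K}\sum_{k=1}^{K} \frac{\cos\bigl(k(\theta'-\theta)\bigr)}{k},\qquad
	H_K = \sum_{k=1}^{K} \frac{1}{k},
\end{equation*}
for which a direct computation of the Fourier coefficients yields $g_K(\E^{\I\theta})=1$, $\left\|g_K\right\|_{\infty}\leq 1$, and $\Energy(g_K)=\left\|g_K\right\|^{2}_{H^{1/2}} = 1/(2H_K)$. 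Given $\alpha\in\Cweak$, I would write $\alpha = a-b$ with $a,b\in\Sigma_{1}$, approximate $a$ and $b$ by monotonically increasing computable sequences $\{a_n\},\{b_n\}\subset\RNc$ with $a_n\to a$, $b_n\to b$, and set $r_n = a_n - b_n$ and $s_n = r_{n+1}-r_n$. By monotonicity of $\{a_n\}$ and $\{b_n\}$,
\begin{equation*}
	\sum_{n=1}^{\infty} \left|s_n\right| \leq (a-a_1) + (b-b_1) < \infty .
\end{equation*}
Choosing $K_n$ recursively in $n$ so that $\left|s_n\right|\left\|g_{K_n}\right\|_{H^{1/2}} \leq 2^{-n}$ (possible since $\left|s_n\right|$ is a computable number and $H_K\to\infty$), I define
\begin{equation*}
	f_* = r_1 + \sum_{n=1}^{\infty} s_n\, g_{K_n} .
\end{equation*}
By construction, the partial sums converge effectively in $\H12$ with tail bounded by $2^{-N+1}$, so $f_*\in\H12_{\mathrm{c}}(\dD)$; and they converge uniformly on $\dD$ since $\sum\left|s_n\right|\left\|g_{K_n}\right\|_{\infty} \leq \sum\left|s_n\right| < \infty$, so $f_*\in\C(\dD)$. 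Because $g_{K_n}(\E^{\I\theta})=1$ for every $n$, evaluating the uniformly convergent series at $\E^{\I\theta}$ yields $f_*(\E^{\I\theta}) = r_1 + \sum_n s_n = \lim_n r_n = \alpha$.

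For Part~2, the plan is to use Fejér means. For $f\in L^{2}_{\mathrm{c}}(\dD)$, the Fourier coefficients $a_k(f), b_k(f)$ from \eqref{equ:FourierCoef} form computable sequences of computable numbers, since for any effective $L^{2}$-approximation $\{\varphi_n\}$ of $f$ by trigonometric polynomials Cauchy--Schwarz gives $\left|a_k(f) - a_k(\varphi_n)\right| \leq \sqrt{2}\left\|f-\varphi_n\right\|_{2}$, providing effective convergence of the coefficients. Consequently, the Fejér means
\begin{equation*}
	\sigma_N(f)(\E^{\I\theta}) = \frac{a_0(f)}{2} + \sum_{k=1}^{N}\left(1-\frac{k}{N+1}\right)\bigl(a_k(f)\cos(k\theta)+b_k(f)\sin(k\theta)\bigr)
\end{equation*}
form a computable sequence of computable numbers, because $\theta\in\RNc$ implies $\cos(k\theta),\sin(k\theta)\in\RNc$. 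Applying Fejér's theorem to the continuous function $f$ yields $\sigma_N(f)\to f$ uniformly on $\dD$, hence in particular $\sigma_N(f)(\E^{\I\theta})\to f(\E^{\I\theta})$. By Definition~\ref{def:RecursiveApprox}, this places $f(\E^{\I\theta})\in\Delta_2$.

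The main obstacle is in Part~1: one must exhibit computable trigonometric polynomials whose value at $\E^{\I\theta}$ is pinned to $1$ while the $\H12$-norm tends to zero. The logarithmic construction above achieves this by distributing the $\H12$-energy across many frequencies so that $\left\|g_K\right\|^{2}_{H^{1/2}} = 1/(2H_K) \sim 1/(2\log K)$, directly exploiting the logarithmic failure of $\H12$ to control $L^{\infty}$. In Part~2, the continuity of $f$ is what unlocks Fejér's theorem; without it the Fejér means need not converge pointwise at $\E^{\I\theta}$, and the bound $f(\E^{\I\theta})\in\Delta_2$ could fail.
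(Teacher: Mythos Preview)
Your proof is correct and follows the same overall strategy as the paper, though with cleaner implementation in both parts. For Part~1, the paper likewise builds normalized trigonometric polynomials with value~$1$ at the target point and small $H^{1/2}$-norm, then superposes them against a bounded-variation sequence converging to~$\alpha$; its building blocks use coefficients $1/(n\log n)$ on a fixed doubly-exponential schedule $M(n)=2^{2^{n^2}}$, whereas your harmonic-coefficient $g_K$ with adaptive~$K_n$ avoids the chain of auxiliary constants $K_3,K_4,K_5$ and also derives the bounded-variation property of the increments directly from the decomposition $\alpha=a-b$, rather than quoting the characterization of~$\Cweak$ from \cite{WeihrauchZhengHierarchy_98}. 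For Part~2, the paper uses truncated Poisson sums $\big(\Op{P}^{M_k}_{r_k}f\big)(\E^{\I\theta})$ along $r_k=1-1/k$, $M_k=k^2-k$, combining the uniform convergence of the Poisson integral on continuous data with an effective tail estimate for the truncation; your Fej\'er-mean argument is more direct---once the Fourier coefficients are seen to be computable via Cauchy--Schwarz, Fej\'er's theorem immediately yields a single computable sequence converging (non-effectively) to $f(\E^{\I\theta})$, with no need for the two-parameter limit.
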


The first part of Theorem~\ref{thm:MainThm2} shows that every weakly computable number $\alpha\in\Cweak$ (cf. Def.~\ref{def:WeakComputableNr}) is the value of a function $f_{*} \in \H12_{\mathrm{c}}(\dD) \cap \C(\dD)$.
Since $\Delta_{1} \subsetneq \Cweak$, it is always possible to choose an $\alpha\in \Cweak$ that is not computable.
Then for the corresponding function $f_{*}$ the value $f_{*}(\E^{\I\theta})$ is not a computable number, i.e. $f_{*}$ does not map all computable numbers onto computable numbers.
Therefore $f_{*}$ is not Banach--Mazur computable and consequently it is also not a computable continuous function.
This answers the question posed at the beginning of this section:

\begin{corollary}
\label{cor:NotBanachMazurComputable}
To every $\theta\in\RNc$ there exists an $f_{*}\in \H12_{\mathrm{c}}(\dD) \cap \C(\dD)$ that is not Banach-Mazur computable with $f_{*}(\E^{\I\theta}) \notin \RNc$.
\end{corollary}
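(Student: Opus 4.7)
The plan is to deduce the corollary directly from Part~1 of Theorem~\ref{thm:MainThm2} together with the strict inclusion $\Delta_{1} \subsetneq \Cweak$ in the first layer of the Zheng--Weihrauch hierarchy. No new harmonic or Sobolev-analytic work is needed; the argument is purely a packaging step that assembles already-stated facts.

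First I would fix an arbitrary $\theta\in\RNc$. Then I would produce a witness $\alpha\in\Cweak\setminus\Delta_{1}$ of non-computability inside $\Cweak$. Such an $\alpha$ exists because $\Sigma_{1}$ strictly contains $\Delta_{1}$ (as pointed out in the remark following Proposition~\ref{prop:Delta1isUnion}), and because $\Sigma_{1}\subset\Cweak$ via the trivial representation $a = a - 0$ allowed by Definition~\ref{def:WeakComputableNr}: concretely, pick any left-computable but not right-computable $a\in\Sigma_{1}\setminus\Delta_{1}$ and set $\alpha=a$.

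Next I would apply Part~1 of Theorem~\ref{thm:MainThm2} to this pair $(\theta,\alpha)$, which immediately yields an $f_{*}\in\H12_{\mathrm{c}}(\dD)\cap\C(\dD)$ with $f_{*}(\E^{\I\theta}) = \alpha\notin\RNc$. To conclude that $f_{*}$ is not Banach--Mazur computable in the sense of Definition~\ref{def:BMCompFunc}, I would feed it the constant computable sequence $\theta_{n}=\theta$; its image under $f_{*}$ is the constant sequence with value $\alpha$, which fails to be a computable sequence of computable numbers because $\alpha\notin\RNc$. This simultaneously delivers both assertions of the corollary, and by the final remark of Section~\ref{sec:Computability} it also re-confirms that $f_{*}$ cannot be a computable continuous function.

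I do not expect any genuine obstacle. The only point requiring a touch of care is bookkeeping: I must verify that Theorem~\ref{thm:MainThm2} is formulated uniformly enough in $\theta$ that one obtains, for each fixed $\theta\in\RNc$, a single $f_{*}$ failing at that particular point. Since Part~1 of the theorem is stated for an arbitrary $\theta\in\RNc$, the corollary follows at once.
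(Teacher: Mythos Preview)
Your proposal is correct and follows essentially the same route as the paper: the paper's justification (given in the paragraph immediately preceding the corollary) likewise chooses $\alpha\in\Cweak\setminus\Delta_{1}$, invokes Part~1 of Theorem~\ref{thm:MainThm2} to obtain $f_{*}$ with $f_{*}(\E^{\I\theta})=\alpha$, and concludes that $f_{*}$ fails Banach--Mazur computability because it does not send all computable inputs to computable outputs. Your version is slightly more explicit in exhibiting the witness $\alpha$ via $\Sigma_{1}\setminus\Delta_{1}\subset\Cweak$ and in spelling out the constant-sequence test for Definition~\ref{def:BMCompFunc}, but the substance is identical.
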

Part 1) of Theorem~\ref{thm:MainThm2} provides a lower bound on the non-computability (measured in the Zheng--Weihrauch hierarchy) of the function values in the set $\H12_{\mathrm{c}}(\dD) \cap \C(\dD)$.
It shows that $\H12_{\mathrm{c}}(\dD) \cap \C(\dD)$ contains "bad" functions $f_{*}$ such that $f_{*}(\E^{\I\theta})$ is only weakly computable at some computable points $\theta\in\dD$. 
Conversely, Part 2) of Theorem~\ref{thm:MainThm2} provides an upper bound on the non-computability of functions in $\H12_{\mathrm{c}}(\dD) \cap \C(\dD)$.
It shows that $f(\E^{\I\theta})$ always belongs to $\Delta_{2}$, i.e. 
$f(\E^{\I\theta})$ is always at least recursively approximable but it will never be less computable than $\Delta_{2}$. Thus $f(\E^{\I\theta})$ will never belong to $\Delta_{n}$ with $n>2$.
In fact, Theorem~\ref{thm:MainThm2} shows that this is even true for all functions in the larger set $L^{2}_{\mathrm{c}}(\dD) \cap \C(\dD) \supsetneq H^{1/2}_{\mathrm{c}}(\dD) \cap \C(\dD)$.

According to Corollary~\ref{cor:NotBanachMazurComputable} %
there always exist functions $f_{*} \in H^{1/2}_{\mathrm{c}}(\dD) \cap \C(\dD)$ such that the value $f_{*}(\E^{\I\theta})$ is not computable.
Thus $f(\E^{\I\theta}) \notin \Sigma_{1}\cap\Pi_{1}$ which means that the value $f_{*}(\E^{\I\theta})$ cannot be approximated from below \emph{and} above.
If $f_{*}(\E^{\I\theta})$ would belong either to $\Sigma_{1}$ or to $\Pi_{1}$ then it would at least be possible to approximate $f_{*}(\E^{\I\theta})$ from below \emph{or} above by a monotonically increasing (decreasing) computable sequence.
More precisely, if $f_{*}(\E^{\I\theta})$ would be in $\Sigma_{1}$ then there would exist a computable sequence $\left\{ \zeta_{n} \right\}_{n\in\NN}$ with
\begin{equation}
\label{equ:seqenceZeta}
	\zeta_{n+1} \geq \zeta_{n}\,,\quad \text{for all}\ n\in\NN 
	\qquad\text{and}\qquad  
	\lim_{n\to\infty} \zeta_{n} = f(\E^{\I\theta})\,,
\end{equation}
and if $f_{*}(\E^{\I\theta})$ would be in $\Pi_{1}$ then there would exist a computable sequence $\left\{ \xi_{n} \right\}_{n\in\NN}$ with
\begin{equation}
\label{equ:seqenceXi}
	\xi_{n+1} \leq \xi_{n}\,,\quad \text{for all}\ n\in\NN 
	\qquad\text{and}\qquad  
	\lim_{n\to\infty} \xi_{n} = f(\E^{\I\theta})\,.
\end{equation}
Now, the important point to notice is that the statement of Part 1) of Theorem~\ref{thm:MainThm2} is even stronger. Namely it shows that $f_{*}(\E^{\I\theta})$ can be even less computable.
Since $f_{*}(\E^{\I\theta})$ is generally only in $\Cweak$ and because $\Sigma_{1} \cup \Pi_{1}\subsetneq \Cweak$, we can choose the $\alpha \in \Cweak$ in the first part of Theorem~\ref{thm:MainThm2} such that $f_{*}(\E^{\I\theta}) = \alpha \notin \Sigma_{1} \cup \Pi_{1}$.
Then $f_{*}(\E^{\I\theta})$ can be approximated neither from below \emph{nor} from above.
So in this situation there exists neither a computable sequence $\left\{ \zeta_{n} \right\}_{n\in\NN}$ that satisfies \eqref{equ:seqenceZeta} 
nor a computable sequence $\left\{ \xi_{n} \right\}_{n\in\NN}$ that satisfies \eqref{equ:seqenceXi}.
In other words, there exists no computable sequence of lower bounds for $f(\E^{\I\theta})$ \emph{and} no computable sequence of upper bounds for $f(\E^{\I\theta})$ that are asymptotically sharp.
Similarly as in Section~\ref{sec:MainMinimizationProblem}, we have the following stronger corollary of Theorem~\ref{thm:MainThm2}.

\begin{corollary}
Let $\alpha \in \Cweak$, $\alpha \notin \Sigma_{1} \cup \Pi_{1}$ be arbitrary and let $f_{*} \in \H12_{\mathrm{c}}(\dD) \cap \C(\dD)$ be the corresponding function as in Part~1) of Theorem~\ref{thm:MainThm2}.
\begin{enumerate}
\item If $\left\{ \zeta_{n} \right\}_{n\in\NN}$ is a sequence that satisfies \eqref{equ:seqenceZeta} then $\left\{ \zeta_{n} \right\}_{n\in\NN}$ is not a computable sequence of computable numbers.
\item If $\left\{ \xi_{n} \right\}_{n\in\NN}$ is a sequence that satisfies \eqref{equ:seqenceXi} then $\left\{ \xi_{n} \right\}_{n\in\NN}$ is not a computable sequence of computable numbers.
\end{enumerate}
\end{corollary}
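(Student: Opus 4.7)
The plan is to argue by direct contradiction, reducing both statements to the definitions of $\Sigma_{1}$ and $\Pi_{1}$. The two parts are entirely symmetric, so I will describe part (1) and indicate how part (2) follows by the same argument with inequalities reversed.

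First, I would fix the data: $\alpha \in \Cweak \setminus (\Sigma_{1} \cup \Pi_{1})$ and $f_{*} \in \H12_{\mathrm{c}}(\dD) \cap \C(\dD)$ with $f_{*}(\E^{\I\theta}) = \alpha$ as produced by Part~1) of Theorem~\ref{thm:MainThm2}. For part (1), assume toward a contradiction that $\left\{\zeta_{n}\right\}_{n\in\NN}$ is a computable sequence of computable numbers that also satisfies \eqref{equ:seqenceZeta}, i.e.\ $\zeta_{n+1}\geq \zeta_{n}$ for all $n$ and $\zeta_{n}\to f_{*}(\E^{\I\theta})$. Then the hypotheses in the definition of $\Sigma_{1}$ are satisfied verbatim with limit $\alpha$, so $\alpha \in \Sigma_{1}$. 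This contradicts the standing assumption that $\alpha \notin \Sigma_{1}\cup \Pi_{1}$, and therefore no such computable sequence of computable numbers can exist.

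For part (2), exactly the same scheme works: if $\left\{\xi_{n}\right\}_{n\in\NN}$ were a computable sequence of computable numbers satisfying \eqref{equ:seqenceXi}, then it would be a monotonically decreasing computable sequence of computable numbers with limit $\alpha$, which by the definition of $\Pi_{1}$ forces $\alpha \in \Pi_{1}$, again contradicting $\alpha \notin \Sigma_{1}\cup \Pi_{1}$.

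There is essentially no analytic obstacle here, since all of the real work has already been done in Theorem~\ref{thm:MainThm2}: that result is what allows us to realize an arbitrary $\alpha \in \Cweak$ as a value $f_{*}(\E^{\I\theta})$ of some function in $\H12_{\mathrm{c}}(\dD) \cap \C(\dD)$. The only subtlety worth flagging is that the assumption $\alpha \notin \Sigma_{1}\cup \Pi_{1}$ is nonvacuous: this is exactly the proper inclusion $\Sigma_{1}\cup \Pi_{1} \subsetneq \Cweak$ recorded after Definition~\ref{def:WeakComputableNr}, which guarantees that such an $\alpha$ exists and hence that the corollary has content.
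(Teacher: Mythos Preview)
Your argument is correct and matches the paper's reasoning: the paper does not supply a separate proof for this corollary, since it follows immediately from the definitions of $\Sigma_{1}$ and $\Pi_{1}$ together with Part~1) of Theorem~\ref{thm:MainThm2}, exactly as you have written. The surrounding discussion in the paper gives precisely your contradiction argument in prose form.
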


Compare this statement with Corollary~\ref{cor:NoSequenceOfUpperBounds}.
It shows that the situation for computing $f(\E^{\I \theta})$ for functions in $\H12_{\mathrm{c}}(\dD) \cap \C(\dD)$ is worse compared with computing the Dirichlet energy of functions in $\Cc(\dD)$. 
In the later case, the Dirichlet energy can at least always be approximated by a sequence of lower bounds.
For the value of $f(\E^{\I \theta})$ there exist in general neither a sequence of upper nor of lower bounds.

\begin{proof}[Theorem~\ref{thm:MainThm2}]
Part 1)
Without loss of generality, we prove the statement only for $\theta = 0$. Adapting the proof to arbitrary $\theta\in\RNc$ is obvious.
For every $M\in\NN$, $M\geq 2$, we consider the function
\begin{equation*}
	\varphi_{M}(\theta)
	= \textstyle\sum^{M}_{n=2} \frac{1}{n\log n} \cos(n \theta)\,,
	\quad \theta\in [-\pi,\pi)\,,
\end{equation*}
and define the positive constant 
\begin{equation*}
	C(M) = \varphi_{M}(0) = \textstyle\sum^{M}_{n=2} \frac{1}{n\log n}\,,
	\quad M\geq 2\;.
\end{equation*}
which is lower bounded by
\begin{align*}
	C(M) &= \textstyle\sum^{M}_{n=2} \frac{1}{n\log n}
	\geq \int^{M+1}_{2} \frac{\d x}{x \log x}
	= \int^{\log(M+1)}_{\log(2)}\frac{\d\tau}{\tau}\\
	&= \log\log (M+1) - \log\log 2
	= \log\tfrac{\log(M+1)}{\log 2}\,.
\end{align*}
Moreover, the definition of $\varphi_{M}$ immediately shows that
\begin{equation}
\label{equ:UpperBoundInftyNorm}
	\left|\varphi_{M}(\theta)\right|
	\leq \textstyle\sum^{M}_{n=2} \frac{\left|\cos(n\theta)\right|}{n\log n} 
	\leq C(M)\,,
	\ \text{for all}\ \theta\in [-\pi,\pi)\;,
\end{equation}
which implies in particular that $\left\|\varphi_{M}\right\|_{\infty} \leq C(M)$.
For the $H^{1/2}$-norm of $\varphi_{M}$, we get the following upper bound:
\begin{align*}
	\left\|\varphi_{M}\right\|^{2}_{H^{1/2}}
	&= \textstyle\sum^{M}_{n=2} n \left| a_{n}(\varphi_{M})\right|^{2}
	= \sum^{M}_{n=2} \frac{1}{n (\log n)^{2}}
	= \textstyle\frac{1}{2(\log 2)^{2}} + \sum^{M}_{n=3} \frac{1}{n (\log n)^{2}}\\
	&\leq \textstyle\frac{1}{2(\log 2)^{2}} + \int^{M}_{2} \frac{\d x}{x (\log x)^{2}}
	\leq \textstyle\frac{1}{2(\log 2)^{2}} + \int^{\infty}_{\log 2} \frac{\d\tau}{\tau^{2}}
	= \textstyle\frac{1}{2(\log 2)^{2}} + \frac{1}{\log 2}
\end{align*}
and we denote by $K_{3} \in \NN$ the smallest natural that satisfies
\begin{equation*}
	\textstyle\frac{1}{\log 2}\left(1 + \frac{1}{2\log 2}\right)
	= \textstyle\frac{1}{2(\log 2)^{2}} + \frac{1}{\log 2} \leq K_{3}\;.
\end{equation*}
Finally, we define
\begin{equation*}
	M(k) = 2^{2^{k^{2}}}\,,
	\quad k\in\NN\;.
\end{equation*}
Then $\left\{ M(k)\right\}_{k\in\NN}$ is a computable sequence of computable numbers.

After these preparations, let $\alpha\in \Cweak$ be an arbitrary weakly computable number.
Then there exists a computable sequence $\left\{ \alpha_{n} \right\}_{n\in\NN}$ of computable numbers (cf. \cite{WeihrauchZhengHierarchy_98}*{Theorem~3})
with
\begin{equation}
\label{equ:properties_alphaSequence}
	\textstyle\sum^{\infty}_{n=1} \left|\alpha_{n+1} - \alpha_{n} \right| < +\infty
	\quad\text{and}\quad
	\lim_{n\to\infty} \alpha_{n} = \alpha\;.
\end{equation}
Without loss of generality, we assume $\alpha_{1} = 0$ and define for all $n\geq 1$ the numbers $d_{n} = \alpha_{n+1} - \alpha_{n}$, and $K_{4} \in \NN$ will denote the smallest natural number that satisfies
\begin{equation*}
	\max_{n\in\NN} \left|d_{n}\right| \leq K_{4}\;.
\end{equation*}
Therewith, we define the function
\begin{equation*}
	f_{*}(\E^{\I\theta}) = \sum^{\infty}_{n=1} d_{n} \frac{\varphi_{M(n)}(\theta)}{C( M(n) )}\,,
	\qquad \theta\in [-\pi,\pi)\;,
\end{equation*}
and for every $K\in\NN$, the partial sum
\begin{equation*}
	f_{K}(\E^{\I\theta}) = \sum^{K}_{n=1} d_{n} \frac{\varphi_{M(n)}(\theta)}{C( M(n) )}\,,
	\qquad \theta\in [-\pi,\pi)\;.
\end{equation*}
This defines a computable sequence $\left\{ f_{K} \right\}_{K\in\NN}$ of trigonometric polynomials.

First we show that $f_{*} \in \C(\dD)$.
Applying the triangle inequality, \eqref{equ:UpperBoundInftyNorm} and the definition of $d_{n}$, we get
\begin{align*}
	\left|f_{*}(\E^{\I\theta}) - f_{K}(\E^{\I\theta}) \right|
	&= \left| \textstyle\sum^{\infty}_{n=K+1} d_{n} \frac{\varphi_{M(n)}(\theta)}{C( M(n) )} \right|\\[1ex]
	&\leq \textstyle\sum^{\infty}_{n=K+1} \left| d_{n} \right|
	= \sum^{\infty}_{n=K+1} \left| \alpha_{n+1} - \alpha_{n} \right|\,,
\end{align*}
for every $\theta \in [-\pi,\pi)$.
Then \eqref{equ:properties_alphaSequence} implies that the right hand side converges to zero as $K\to\infty$.
This shows that 
$$\textstyle\lim_{K\to\infty} \left\|f_{*} - f_{K} \right\|_{\infty} = 0\,,$$ 
proving that $f$ is a continuous function.

With respect to the convergence in the $H^{1/2}$-norm, we get
\begin{multline*}
	\left\| f_{*} - f_{K} \right\|_{H^{1/2}}
	= \left\| \sum^{\infty}_{n=K+1} d_{n} \frac{\varphi_{M(n)}}{C( M(n) )} \right\|_{H^{1/2}}\\
	\leq \sum^{\infty}_{n=K+1} \left\| d_{n} \right\| \tfrac{1}{C( M(n) )} \left\| \varphi_{M(n)} \right\|_{H^{1/2}}
	< K_{3} K_{4} \sum^{\infty}_{n=K+1} \frac{1}{\log\tfrac{\log M(n)}{\log 2}}\\
	= K_{3} K_{4} \sum^{\infty}_{n=K+1} \frac{1}{\log (2^{n^2})}
	= \frac{K_{3} K_{4}}{\log 2} \sum^{\infty}_{n=K+1} \frac{1}{n^2}
	= \frac{K_{3} K_{4}}{\log 2} \lim_{N\to\infty}\sum^{N}_{n=K+1} \frac{1}{n^2}\\
	\leq \frac{K_{3} K_{4}}{\log 2} \lim_{N\to\infty} \int^{N}_{K} \frac{\d x}{x^2}
	=  \frac{K_{3} K_{4}}{\log 2} \frac{1}{K}\;.
\end{multline*}
So if $K_{5} \in\NN$ denotes the smallest natural number that satisfies $\frac{K_{3} K_{4}}{\log 2} \leq K_{5}$, we thus have
\begin{equation*}
	\left\|f_{*} - f_{K} \right\|_{H^{1/2}} 
	\leq K_{5}\, \frac{1}{K}\,,
	\quad\text{for all}\ K\in\NN\,.
\end{equation*}
This shows that $\left\{ f_{K} \right\}_{K\in\NN}$ effectively converges to $f_{*}$ in the $H^{1/2}$-norm.
Since $\left\{ f_{K} \right\}_{K\in\NN}$ is a computable sequence of trigonometric polynomials in $H^{1/2}$, we thus have shown that $f_{*} \in H^{1/2}_{\mathrm{c}} \cap \C(\dD)$.

Part 2)
Let $f\in L^2_{\mathrm{c}}(\dD)\cap\C(\dD)$ and $\theta \in [-\pi,\pi)$ be arbitrary.
We have to show that there exists a computable sequence of computable numbers $\left\{x_{n}\right\}_{n\in\NN} \subset \RN$ that converges to $f(\E^{\I\theta})$.
Since $f \in \C(\dD)$ its Poisson integral can be written as in \eqref{equ:PoissionSeries} and where the sum converges uniformly for all $0 \leq r < 1$ and $\theta\in [-\pi,\pi)$.
Moreover, since $f \in L^2_{\mathrm{c}}(\dD)$, its Fourier coefficients form two computable sequences $\left\{ a_{n}(f) \right\}_{n\in\NN}$ and $\left\{ b_{n}(f) \right\}_{n\in\NN}$ of computable numbers in $\ell^{2}$.
We denote by $K_{1} \in \NN$ the smallest non-negative integer that satisfies
\begin{equation*}
	\max\Big( \max_{n\in\NN} \left|a_{n}(f)\right| , \max_{n\in\NN} \left|b_{n}(f)\right| \Big )
	\leq K_{1}\;.
\end{equation*}
For any $M\in\NN$, we write 
\begin{equation*}
	\left( \Op{P}^{M}_{r} f \right)(\E^{\I\theta}) =
	\frac{a_{0}(f)}{2} + \sum^{M}_{n=1} r^{n} \left[ a_{n}(f)\cos(n\theta) + b_{n}(f)\sin(n\theta) \right].
\end{equation*}
for the partial sum of the Poisson integral in \eqref{equ:PoissionSeries}.
Then for every $M\in\NN$, we have
\begin{multline*}
	\left| \left( \Op{P}_{r} f \right)(\E^{\I\theta}) - \left( \Op{P}^{M}_{r} f \right)(\E^{\I\theta}) \right|\\
	\leq \sum^{\infty}_{n=M+1}\!\!\! r^{n} \left(\left|a_{n}(f)\right| + \left|b_{n}(f)\right|\right)
	\leq K_{1}\!\!\! \sum^{\infty}_{n=M+1}\!\!\! r^{n}
	= K_{1} \frac{r^{M+1}}{1-r}\;.
\end{multline*} 
Next, we define $r_{k} = 1-1/k$ and $M_{k} = k^2- k$ for $k\in\NN$.
Then the previous inequality gives
\begin{equation*}
	\left| \left( \Op{P}_{r_k} f \right)(\E^{\I\theta}) - \left( \Op{P}^{M_k}_{r_k} f \right)(\E^{\I\theta}) \right|
	< K_{1}\, k \left(1 - \tfrac{1}{k}\right)^{k^2+k}\,.
\end{equation*}
To further upper bound the right hand side, we use the known inequalities
$(1-\frac{1}{k})^{k+1} < \E^{-1} < (1-\frac{1}{k})^{k}$ and $\E^{-1} < \frac{1}{2}$. 
This gives 
\begin{equation*}
	\left( 1-\tfrac{1}{k} \right)^{k(k+1)}
	< \E^{-k} < 2^{-k}\,,
\end{equation*}
and therewith, we finally obtain
\begin{equation}
\label{equ:PoisEffConv}
	\left| \left( \Op{P}_{r_k} f \right)(\E^{\I\theta}) - \left( \Op{P}^{M_k}_{r_k} f \right)(\E^{\I\theta}) \right|
	< K_{1}\, \frac{k}{2^{k}}\,.
\end{equation}
Now, we define
\begin{equation*}
	x_{k} = \left( \Op{P}^{M_k}_{r_k} f \right)(\E^{\I\theta})\,,
	\qquad k\in\NN\,.
\end{equation*}
Then its is clear that $\left\{ x_{k} \right\}_{k\in\NN}$ is a computable sequence of computable numbers, for which we have
\begin{align*}
	\left| f(\E^{\I\theta}) - x_{k}\right|
	&= \left|f(\E^{\I\theta}) - \left( \Op{P}^{M_k}_{r_k} f \right)(\E^{\I\theta}) \right|\\
	&\leq \left|f(\E^{\I\theta}) - \left( \Op{P}_{r_k} f \right)(\E^{\I\theta}) \right| + \left| \left( \Op{P}_{r_k} f \right)(\E^{\I\theta}) - \left( \Op{P}^{M_k}_{r_k} f \right)(\E^{\I\theta}) \right|\;.
\end{align*}
The first term on the right hand side converges to zero as $k\to\infty$ because the Poisson integral $\Op{P}_{r} f$ of a continuous function $f$ converges uniformly to $f$ as $r\to 1$, %(cf., e.g., \cites{Hoffman,Rudin,Zygmund,PB_Book_AdvTopics}), 
and the second term (effectively) converges to zero because of \eqref{equ:PoisEffConv}.
Thus, the computable sequence $\left\{ x_{k} \right\}_{k\in\NN}$ of computable numbers converges (not necessarily effective) to $f(\E^{\I\theta})$.
This proves that $f(\E^{\I\theta}) \in \Delta_{2}$.
\end{proof}

% ==================================
% ========== Conclusion ============
% ==================================
\section{Conclusions and Discussions}
\label{sec:Conclusion}

The classical Dirichlet problem \eqref{equ:DirProblem} on the unit disk can be solved using two different major approaches.
The first approach, based on Dirichlet principle, searches for the function of smallest Dirichlet energy that satisfies the boundary condition.
The second approach, based the maximum principle of harmonic functions, computes the Poisson integral of the boundary function.
We have shown that in both approaches a computable boundary function yields generally a non-computable solution.
Using the Zheng--Weihrauch hierarchy, we studied the following two problems.

\begin{enumerate}
\item
Consider the Dirichlet problem \eqref{equ:DirProblem} with a computable continuous boundary function $f$.
What are possible values for the degree of non-computability of the corresponding Dirichlet energy $\underline{\Energy}(f)$?
\item
Consider the Dirichlet problem \eqref{equ:DirProblem} with a boundary function $f$ that is continuous on $\dD$ and that is a $\H12(\dD)$-computable function.
Let $\zeta\in\dD$ be an arbitrary computable point on the unit circle.
What are possible values for the degree of non-computability of the function value $f(\zeta)$ at $\zeta$?
\end{enumerate}
To tackle these two problems, we needed to solve the following two subproblem for each of the above problems:
\begin{enumerate}
\item[a)]
Find an \emph{upper bound} on the degree of non-computability, i.e. find the layer in the Zheng--Weihrauch hierarchy such that all possible solutions belong to this layer.
\item[b)] 
Find a \emph{lower bound} on the degree of non-computability.
To this end, one chooses a particular layer in the Zheng--Weihrauch hierarchy.
Then one needs to find for a number $\alpha$ in this layer (that does not belong to a lower layer of the hierarchy) an input such that the result of our operation on this input is $\alpha$.
The lowest layer for which such a number $\alpha$ exists is the desired lower bound.
\end{enumerate}
For Problem~1, we were able to completely solve both subproblems and we have shown that the lower and upper bound coincide and that it is equal to $\Sigma_{1}$.
Thus the degree of non-computability of the Dirichlet energy $\underline{\Energy}(f)$ for computable continuous functions $f$ lies in the first layer of the Zheng--Weihrauch hierarchy.
For Problem~2, we have shown that the lower bound on the degree of non-computability is $\Cweak$, i.e. we showed that for every computable $\zeta\in\dD$ and every $\alpha\in\Cweak$ there exists a function $f_{*} \in \H12_{\mathrm{c}}(\dD) \cap \C(\dD)$ such that $f(\zeta) = \alpha$. As an upper bound on the degree of non-computability, we obtained $\Delta_{2}$, i.e. $f(\zeta)$ always belongs to $\Delta_{2}$ for every $f \in \H12_{\mathrm{c}}(\dD) \cap \C(\dD)$. 
Since $\Cweak \subsetneq \Delta_{2}$, it is not quite clear weather $\Delta_{2}$ is indeed the correct degree of non-computability for Problem 2, or whether our upper bound is not good enough.

% =============================================
% ========== Appendix =========================
% =============================================
\appendix
\section{The complete Zheng--Weihrauch hierarchy}
\label{sec:AppZW}

Continuing the procedure described in Section~\ref{sec:ZWHierarchy} and alternately applying $\sup$ and $\inf$ to computable sequence of multiple indices, Zheng and Weihrauch defined in \cite{ZhengWeihrauch_MathLog01} the following hierarchy of real numbers.

\begin{definition}
For $n\in\NN$ let $\bm = (m_{1}, \dots, m_{n}) \in \NN^{n}$. 
Then for every $n\in\NN$, the sets $\Sigma_{n} \subsetneq \RN$, $\Pi_{n} \subsetneq \RN$, and $\Delta_{n} \subsetneq \RN$ are defined as follows:
\begin{itemize}
\item A number $x\in\RN$ belongs to $\Sigma_{n}$ if and only if there exists a $n$-fold computable sequence $\left\{ r_{\bm} \right\}_{\bm\in\NN^{n}} \subset \QN$ of rational numbers such that
\begin{equation*}
	x = \sup_{m_{1} \in \NN} \inf_{m_{2} \in \NN} \sup_{m_{3} \in \NN} \dots \underset{\bm \in \NN}\Theta r_{\bm}
\end{equation*}
and where $\Theta$ stands for the infimum if $n$ is even and for the supremum if $n$ is odd.
\item A number $x\in\RN$ belongs to $\Pi_{n}$ if and only if there exists an $n$-fold computable sequence $\left\{ r_{\bm} \right\}_{\bm\in\NN^{n}} \subset \QN$ of rational numbers such that
\begin{equation*}
	x = \inf_{m_{1} \in \NN} \sup_{m_{2} \in \NN} \inf_{m_{3} \in \NN} \dots \underset{\bm \in \NN}\Theta r_{\bm}
\end{equation*}
and where $\Theta$ stands for the supremum if $n$ is even and for the infimum if $n$ is odd.
\item $\Delta_{n} = \Sigma_{n} \cap \Pi_{n}$.
\end{itemize}
\end{definition}

\begin{remark}
For every $n\geq 1$, one has $\Delta_{n} \subsetneq \Sigma_{n}$, $\Delta_{n} \subsetneq \Pi_{n}$, and $\Delta_{n} \subseteq \Delta_{n+1}$.
Recall that $\Delta_{1}$ is the set of all computable numbers, whereas $\Delta_{2}$ is the set of all recursively approximable numbers.
So the nested sets $\Delta_{n}$, $n\geq 1$ describe sets of numbers that are increasingly less computable as $n$ increases.
\end{remark}

\begin{remark}
It might be useful to note that for any $n\in\NN$, the set $\Delta_{n}$ is actually a field \cite{ZhengWeihrauch_MathLog01}, i.e. it is closed under the usual arithmetical operations addition, subtraction, multiplication and division.
In particular, $\Delta_{1}$ is the field of computable numbers.
\end{remark}

% =====================================================================================================
% ===== BIBLIOGRAPHIE =================================================================================
% =====================================================================================================
%\bibliographystyle{elsarticle-num} 
%\bibliography{../publications,../pub_books,../pub_Pohl}

\begin{bibdiv}
\begin{biblist}

\bib{Acosta_FEM_SIAM17}{article}{
      author={Acosta, Gabriel},
      author={Borthagaray, Juan~Pablo},
       title={{A fractional Laplace equation: Regularity of solutions and
  finite element approximations}},
        date={2017},
     journal={SIAM J. Numer. Anal.},
      volume={55},
      number={2},
       pages={472\ndash 495},
}

\bib{Adams_SobolevSpaces}{book}{
      author={Adams, Robert~A.},
      author={Fournier, John~J.~F.},
       title={{Sobolev Spaces}},
     edition={2},
   publisher={Academic Press},
     address={Amsterdam},
        date={2003},
}

\bib{Arcozzi_DirichletSpace11}{article}{
      author={Arcozzi, Nicola},
      author={Rochberg, Richard},
      author={Sawyer, Eric~T.},
      author={Wick, Brett~D.},
       title={{The Dirichlet space: A Survey}},
        date={2011},
     journal={{New York J. Math.}},
      volume={17a},
       pages={45\ndash 86},
}

\bib{Bertsekas_ConvexOpt}{book}{
      author={Bertsekas, Dimitri~P.},
       title={{Convex Optimization Algorithms}},
   publisher={Athena Scientific},
     address={Nashua, USA},
        date={2015},
}

\bib{Boyd_ConvexOpt}{book}{
      author={Boyd, Stephen},
      author={Vandenberghe, Lieven},
       title={{Convex Optimization}},
   publisher={Cambridge University Press},
     address={Cambridge},
        date={20004},
}

\bib{BrennerScott_FEM2008}{book}{
      author={Brenner, Susanne~C.},
      author={Scott, L.~Ridgway},
       title={{The Mathematical Theory of Finite Element Methods}},
     edition={3},
   publisher={Springer},
     address={New York, USA},
        date={2008},
}

\bib{CHENG_BME2005}{article}{
      author={Cheng, Alexander~H.-D.},
      author={Cheng, Daisy~T.},
       title={{Heritage and early history of the boundary element method}},
        date={2005-03},
     journal={{Eng. Anal. Bound. Elem.}},
      volume={29},
      number={3},
       pages={268\ndash 302},
}

\bib{Church_AJM1936}{article}{
      author={Church, Alonzo},
       title={{An unsolvable problem of elementary number theory}},
        date={1936-04},
     journal={{Amer. J. Math.}},
      volume={58},
      number={2},
       pages={345\ndash 363},
}

\bib{Courant_BullAMS43}{article}{
      author={Courant, Richard},
       title={{Variational methods for the solution of problems of equilibrium
  and vibrations}},
        date={1943},
     journal={{Bull. Amer. Math. Soc.}},
      volume={49},
       pages={1\ndash 23},
}

\bib{Evans_PDEs}{book}{
      author={Evans, Lawrence~C.},
       title={{Partial Differential Equations}},
     edition={2},
   publisher={American Mathematical Society},
     address={Providence, RI},
        date={2010},
}

\bib{Funk_Variationsrechnung_62}{book}{
      author={Funk, Paul},
       title={{Variantionsrechnung und ihre Anwendunge in Physik und Technik}},
      series={{Die Grundlehren der mathematischen Wissenschaften in
  Einzeldarstellungen}},
   publisher={Springer--Verlag},
     address={Berlin},
        date={1962},
      volume={94},
}

\bib{Galerkin_1915}{article}{
      author={Galerkin, B.~G.},
       title={{Rods and plates: Series in some questions of elastic equilibrium
  of rods and plates}},
        date={1915},
     journal={{Bulletin of Engineers (Vest. Inzh. Tech.)}},
      volume={19},
      number={10},
       pages={897\ndash 908},
}

\bib{GanderWanner_SIAM12}{article}{
      author={Gander, Martin~J.},
      author={Wanner, Gerhard},
       title={{From Euler, Ritz, and Galerkin to Modern Computing}},
        date={2012},
     journal={{SIAM Review}},
      volume={54},
      number={4},
       pages={627\ndash 666},
}

\bib{Goedel_1931}{article}{
      author={G{\"o}del, Kurt},
       title={{\"Uber formal unentscheidbare S\"atze der Principia Mathematica
  und verwandter Systeme I}},
        date={1931-12},
     journal={Monatsh. f. Mathematik und Physik},
      volume={39},
       pages={173\ndash 198},
}

\bib{Grote_GalerkinFEM_06}{article}{
      author={Grote, Marcus~J.},
      author={Schneebeli, Anna},
      author={Sch{\"o}tzau, Dominik},
       title={{Discontinuous Galerkin finite element method for the wave
  equation}},
        date={2006},
     journal={{SIAM J. Numer. Anal.}},
      volume={44},
      number={6},
       pages={2408\ndash 2431},
}

\bib{KleenePost_54}{article}{
      author={Kleene, S.~C.},
      author={Post, Emil~L.},
       title={{The upper semi-lattice of degrees of recursive unsolvability}},
        date={1954},
     journal={{Ann. Math.}},
      volume={59},
      number={3},
       pages={379\ndash 407},
}

\bib{Post_AMS44}{article}{
      author={l.~Post, Emil},
       title={{Recursively enumerable sets of positive integers and their
  decision problems}},
        date={1944-06},
     journal={{Bull. Amer. Math. Soc.}},
      volume={50},
       pages={284\ndash 316},
}

\bib{Leipholz_87}{article}{
      author={Leipholz, H.~H.~E.},
       title={On direct methods and the calculus of variations},
        date={1987},
     journal={{Z. Angew. Math. Mech.}},
      volume={67},
      number={4},
       pages={{T13\ndash T22}},
}

\bib{Liflyand_HarmonicAnalysis}{book}{
      author={Liflyand, Elijah},
       title={{Harmonic Analysis on the Real Line}},
   publisher={Birkh{\"a}user},
     address={Cham},
        date={2021},
}

\bib{Mikhlin_VariationalMethods}{book}{
      author={Mikhlin, Solomon~Grigorevich},
       title={{Variational Methods in Mathematical Physics}},
   publisher={Pergamon Press},
     address={Oxford},
        date={1964},
}

\bib{PourEl_Computability}{book}{
      author={Pour-El, Marian~B.},
      author={Richards, J.~Ian},
       title={{Computability in Analysis and Physics}},
   publisher={Springer-Verlag},
     address={Berlin},
        date={1989},
}

\bib{Repin_CMAM17}{article}{
      author={Repin, Sergey},
       title={{One hundred years of the Galerkin method}},
        date={2017},
     journal={Comput. Meth. App. Math.},
      volume={17},
      number={3},
       pages={351\ndash 357},
}

\bib{Ritz_1909}{article}{
      author={Ritz, Walter},
       title={{\"Uber eine neue Methode zur L\"osung gewisser
  Variationsprobleme der mathematischen Physik}},
        date={1909},
     journal={J. Reine Angew. Math.},
      volume={135},
       pages={1\ndash 61},
}

\bib{Ross_DirichletSpace}{article}{
      author={Ross, William~T.},
       title={{The classical Dirichlet space}},
        date={2006},
     journal={Contemp. Math.},
      volume={393},
       pages={171\ndash 197},
}

\bib{Rudin}{book}{
      author={Rudin, Walter},
       title={{Real and Complex Analysis}},
     edition={3},
   publisher={McGraw-Hill},
     address={Boston, etc.},
        date={1987},
}

\bib{Turing_1937}{article}{
      author={Turing, A.~M.},
       title={{On computable numbers, with an application to the
  Entscheidungsproblem}},
        date={1937},
     journal={{Proc. London Math. Soc.}},
      volume={s2-42},
      number={1},
       pages={230\ndash 265},
}

\bib{Turing_1938}{article}{
      author={Turing, A.~M.},
       title={{On computable numbers, with an application to the
  Entscheidungsproblem. A correction}},
        date={1938},
     journal={{Proc. London Math. Soc.}},
      volume={s2-43},
      number={1},
       pages={544\ndash 546},
}

\bib{Weihrauch_ComputableAnalysis}{book}{
      author={Weihrauch, Klaus},
       title={{Computable Analysis}},
   publisher={Springer-Verlag},
     address={Berlin},
        date={2000},
}

\bib{WeihrauchZhengHierarchy_98}{inproceedings}{
      author={Weihrauch, Klaus},
      author={Zheng, Xizhong},
       title={A finite hierarchy of the recursively enumerable real numbers},
        date={1998},
   booktitle={{Mathematical Foundations of Computer Science}},
   publisher={Springer},
     address={Berlin, Heidelberg},
       pages={798\ndash 806},
}

\bib{ZhengWeihrauch_MathLog01}{article}{
      author={Zheng, Xizhong},
      author={Weihrauch, Klaus},
       title={{The arithmetical hierarchy of real numbers}},
        date={2001},
     journal={Math.~Log.~Quart.},
      volume={47},
      number={1},
       pages={51\ndash 65},
}

\bib{Zygmund}{book}{
      author={Zygmund, Antoni},
       title={{Trigonometric Series}},
     edition={3},
   publisher={Cambridge University Press},
     address={Cambridge, UK},
        date={2002},
}

\end{biblist}
\end{bibdiv}

\end{document}